\definecolor{ForestGreen}{rgb}{0.1333,0.5451,0.1333}
\definecolor{DarkRed}{rgb}{0.80,0,0}
\definecolor{Red}{rgb}{1,0,0}
\newtheorem{theorem}{Theorem}[section]
\newtheorem{lemma}[theorem]{Lemma}
\newtheorem{fact}[theorem]{Fact}
\begin{document}

\title{Local Sherman's Algorithm for Multi-commodity Flow}
\author{
Jason Li\thanks{Carnegie Mellon University. email: \texttt{jmli@cs.cmu.edu}}
\and
Thatchaphol Saranurak\thanks{University of Michigan. email: \texttt{thsa@umich.edu}. Supported by NSF Grant CCF-2238138. Partially funded by the Ministry of Education and Science of Bulgaria's support for INSAIT, Sofia University ``St.~Kliment Ohridski'' as part of the Bulgarian National Roadmap for Research Infrastructure.}
}
\date{\today}
\maketitle

\begin{abstract}
We give the first local algorithm for computing multi-commodity flow and apply it to obtain a $(1+\epsilon)$-approximate algorithm for computing a $k$-commodity flow on an expander with $m$ edges in $(m+\epsilon^{-3}k^3D)n^{o(1)}$ time, where $D$ is the total demand. This is the first $(1+\epsilon)$-approximate algorithm that breaks the $km$ multi-commodity flow barrier, albeit only on expanders. All previous algorithms either require $\Omega(km)$ time \cite{Sherman2017area} or a big constant approximation \cite{haeupler2024low}.

Our approach is by localizing Sherman's flow algorithm \cite{Sherman13} when put into the Multiplicative Weight Update (MWU) framework. We show that, on each round of MWU, the oracle could instead work with the \emph{rounded weights} where all polynomially small weights are rounded to zero. Since there are only few large weights, one can implement the oracle call with respect to the rounded weights in sublinear time. This insight is generic and may be of independent interest. 
\end{abstract}

\section{Introduction}

The maximum flow problem is among the most central problems in combinatorial optimization. A long line of work~\cite{ford1956maximal,Dinitz70,GoldbergR98,LiuS20} recently culminated in almost-linear time algorithms to solve maximum flow to \emph{high accuracy}~\cite{ChenKLPGS22,van2023deterministic}, i.e., an $(1+\epsilon)$-approximation in time proportional to $\textup{polylog}(1/\epsilon)$. For \emph{low-accuracy} solvers, i.e., time proportional to $\textup{poly}(1/\epsilon)$, a more recent series of developments~\cite{christiano2011electrical,Sherman13} culminated in the $\tilde{O}(m/\epsilon)$ time algorithm of Sherman~\cite{Sherman2017area} for undirected graphs. Overall, the progress in the last decade has been astronomical in both accuracy regimes, and has greatly advanced the community's understanding of this fundamental problem.

The (concurrent) \emph{multi-commodity} flow problem, on the other hand, is much less understood. Solving even $2$-commodity flow on directed graphs to high accuracy is as hard as solving general LPs~\cite{ding2022two}, another major open problem in algorithm design. For dense graphs, algorithms faster than solving LPs are known~\cite{van2023faster}. In the low-accuracy regime, Sherman's algorithm for undirected graphs directly translates to $\tilde{O}(km/\epsilon)$-time for $k$-commodity flow, which is already super-linear for moderate values of $k$. However, a dependency of $km$ is difficult to avoid, as each of the $k$ flows may use a linear number of edges, leading to a representation size of $\Omega(km)$ for even outputting the flow. In the constant-approximation setting, a sparse \emph{implicit} representation of multi-commodity flow was recently developed~\cite{haeupler2024low}, leading to $\tilde{O}((k+m)n^\delta)$ time algorithms with approximation factor $\exp(\textup{poly}(1/\delta))$ for any constant $\delta>0$. It is unclear if we can hope to improve the approximation factor to $(1+\epsilon)$.

Another orthogonal line of work focuses on \emph{local} flow algorithms on \emph{unit-capacity} graphs. A local algorithm runs in time \emph{sublinear} in the input size, i.e., without reading the entire graph, and the unit-capacity setting makes local algorithms possible. During the last decade, researchers have \emph{localized} many single-commodity flow algorithms including (binary) blocking flow algorithms \cite{OrecchiaZ14,NanongkaiSY19}, augmenting path algorithms \cite{ChechikHILP17,ForsterNYSY20}, push-relabel algorithms \cite{HenzingerRW17,SaranurakW19,chen2024parallel}, and length-constrained flow algorithms \cite{HaeuplerLS24}. These local algorithms found impressive applications in \emph{dynamic} graph data structures and fast algorithms for expander decomposition and graph connectivity. All previous local algorithms except \cite{HaeuplerLS24}, however, iteratively update the flow using the current residual graph. Thus, it is difficult to generalize to the multi-commodity setting where the different commodities may interact with each other. (While it is conceivable that the algorithm of \cite{HaeuplerLS24} could be adapted to the multi-commodity setting, this algorithm is very involved.)
Ultimately, there was no local algorithm for multi-commodity flow prior to our work. 

In this work, we give the first local algorithm for multi-commodity flow that either reports that the given demand is infeasible or feasibly routes this demand ``almost completely'' on each vertex.  
\begin{theorem}
[Informal version of \Cref{thm:main-multi})] There is an algorithm that, given a unit-capacity undirected graph $G$, a $k$-commodity demand $b$, and error parameter $\epsilon$, in $O(\epsilon^{-2}\log n(\|b\|_{0}+\epsilon^{-1}\|b\|_{1}))$ time outputs either a certificate that the demand is infeasible or a feasible $k$-commodity flow whose residual demand on each vertex $v$ for each commodity is at most $\epsilon\deg(v)$.
\end{theorem}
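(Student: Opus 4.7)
The plan is to run Sherman's MWU-style algorithm for $T = O(\epsilon^{-2}\log n)$ rounds and to show that each round can be implemented in $O(\|b\|_0 + \epsilon^{-1}\|b\|_1)$ local time by restricting the oracle to act only on vertices with ``large'' MWU weights. The $\epsilon\deg(v)$ residual bound on each vertex arises naturally from the rounding threshold used inside the oracle.

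\textbf{MWU setup.} I would maintain a per-commodity vertex weight $w^{(c)}_v$, initially supported on $\textup{supp}(b^{(c)})$. In each round, an oracle receives the current weighted graph and residual demand and returns a partial flow for each commodity, after which $w^{(c)}_v$ is updated multiplicatively by the excess of commodity $c$ introduced at $v$. After $T$ rounds, either the averaged flow satisfies the per-vertex residual guarantee $\epsilon\deg(v)$ for every commodity, or the final weights certify a cut of insufficient capacity, giving infeasibility. This is a direct multi-commodity lift of Sherman's scheme and inherits its $T = O(\epsilon^{-2}\log n)$ round count on expanders.

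\textbf{Rounded-weight oracle and local implementation.} The key technical step is the rounded-weight lemma: in any round, zeroing out any $w^{(c)}_v$ that is at most a $1/\text{poly}(n)$ fraction of the current maximum weight preserves correctness up to $1/\text{poly}(n)$ additive error, which is absorbed into $\epsilon$. The oracle therefore only ``sees'' the set $S^{(c)}$ of vertices with non-trivial weight. I would then show that $|S^{(c)}|$ stays $O(\|b^{(c)}\|_0 + \epsilon^{-1}\|b^{(c)}\|_1)$ amortized across rounds: starting from the demand support, each unit of oracle-pushed flow can only promote $O(1)$ new vertices to non-trivial weight, and at most $O(\epsilon^{-1}\|b^{(c)}\|_1)$ units need to be pushed in total per commodity. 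The oracle itself is implemented by a local BFS/Dijkstra on $S^{(c)} \cup \textup{supp}(b^{(c)})$ in $O(|S^{(c)}| + \|b^{(c)}\|_0)$ time. Summing over rounds and commodities yields the claimed total time.

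\textbf{Main obstacle.} The hardest part is proving simultaneously that (a) truncating MWU weights preserves Sherman's convergence guarantee and (b) the set of non-trivially-weighted vertices remains sublinear throughout. Establishing (a) requires a perturbation analysis of MWU with an inexact oracle, showing that the rounding error is dominated by the existing $\epsilon$-slack. Establishing (b) needs a charging argument tracking how multiplicative updates propagate the ``active frontier'' of weights and explicitly using the expansion of $G$ to prevent this frontier from growing too fast. Combining these with the multi-commodity interaction --- where weights of different commodities live on the same graph but are updated independently --- is the technical crux of the theorem.
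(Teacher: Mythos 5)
Your high-level plan (MWU rounds with weights rounded to zero below a threshold, so that the oracle only touches ``active'' vertices, with the rounding error absorbed by an approximate-MWU analysis) matches the paper's strategy. But there are three concrete gaps.

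First, your infeasibility certificate is wrong for $k\ge 2$ commodities. You claim that ``the final weights certify a cut of insufficient capacity.'' Since the $k$-commodity flow--cut gap is $\Omega(\log k)$, no cut can certify infeasibility of a $(1+\epsilon)$-approximate multi-commodity instance. The paper instead outputs the rounded dual weights $\tilde\phi^i$ themselves as a Farkas-type certificate: the weighted combination of the LP constraints is shown to be positive for \emph{every} feasible $f\in\mathcal F$, because the oracle's flow $f^i$ is chosen to minimize that weighted combination over $\mathcal F$ (edge by edge, routing the commodity with the largest potential difference). A sweep-cut certificate exists only in the single-commodity warm-up.

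Second, your locality argument leans on the wrong mechanism. You invoke expansion of $G$ to control the growth of the active frontier, but \Cref{thm:main-multi} holds on \emph{arbitrary} unit-capacity graphs; expansion enters only in the later application (\Cref{thm:expander-main}) to route the residual demand. The paper's bound on the active set is purely combinatorial: a vertex has nonzero rounded weight only if its cumulative relative excess satisfies $|r^{\le i-1}_v|\ge\frac{\ln n}{\alpha}$ (\Cref{lem:phi-helper-multi}), and the total degree-weighted excess satisfies $\sum_v|r^{\le i}_{v,j}|\deg(v)\le i\|b_j\|_1$ (\Cref{lem:r-helper-multi}). The second bound is the real crux that your sketch skips: it requires showing that sending full flow from higher to lower rounded potential never \emph{increases} the total absolute excess, which the paper proves by an edge-by-edge exchange argument using \Cref{lem:phi-helper-multi} to guarantee that an active endpoint has excess at least $2\deg(v)$ in the right sign. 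Your claim that ``each unit of pushed flow promotes $O(1)$ new vertices'' is not the argument and is not obviously true as stated; without the monotonicity of total excess you cannot bound $\sum_i\textbf{\textup{vol}}(A^i)$.

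Third, the oracle you describe (local BFS/Dijkstra per commodity) is more machinery than is needed and is not specified enough to verify the MWU condition $\langle g^i,\tilde w^i\rangle\le0$. The correct oracle here is trivial: for each edge, send one unit of the single commodity maximizing the rounded potential difference, in the downhill direction; this simultaneously maximizes the dual pairing (giving the certificate when the termination test fires) and touches only edges incident to active vertices (giving locality). The relative-to-maximum rounding threshold you propose is also a cosmetic deviation from the paper's absolute threshold $w<n$, but either works once fed into the approximate-weights MWU theorem, which requires only $\|w^i-\tilde w^i\|_\infty\le|J|^{O(1)}$.
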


Then, we apply our local algorithm to solve $(1+\epsilon)$-approximate $k$-commodity flow on a unit-capacity expander in time $\approx m+\epsilon^{-3}k^3D$, where $D$ is the total demand.
This is the first $(1+\epsilon)$-approximate $k$-commodity flow algorithm on expanders that breaks the $km$ multi-commodity flow barrier (for moderate values of $k$ and $D$).

\begin{restatable}{theorem}{Expander}\label{thm:expander-main}
Consider $k$ pairs of tuples $(s_1,t_1,d_1),\ldots,(s_k,t_k,d_k)\in V\times V\times\mathbb R_{\ge0}$ indicating that we want to send $d_j$ units of commodity $j\in[k]$ from $s_j$ to $t_j$, and let $\epsilon\in(0,1)$ be an error parameter. If $G$ is a $\phi$-expander, then there is a deterministic algorithm that either certifies that there is no feasible $k$-commodity flow routing the demand, or computes a $k$-commodity flow with congestion $1+\epsilon$. The algorithm runs in time $(m+\epsilon^{-3}k^3D)\cdot\textup{poly}(1/\phi)\cdot2^{O(\sqrt{\log n\log\log n})}$ where $D=\sum_jd_j$.
\end{restatable}

Our results are obtained by localizing Sherman's approximate maximum flow algorithm \cite{Sherman13}.
We remark that Sherman's algorithm and our adaptation are also immediately parallelizable, though a recent parallel local algorithm based on push-relabel has also been discovered~\cite{chen2024parallel}.


\subsection{Our Techniques}

Our starting point is Sherman's initial (or ``unaccelerated'') $\tilde{O}(m/\epsilon^2)$ time algorithm~\cite{Sherman13} with a trivial ``preconditioner'', which we interpret in the Multiplicative Weights Update framework. (We assume no prior background of Sherman's algorithm, and therefore do not discuss the role of preconditioning in Sherman's algorithm.)

At a high level, Sherman's algorithm (with a trivial preconditioner) has a natural interpretation as follows. Each vertex has a \emph{potential} that is initially $0$. On each of $O(\epsilon^{-2}\log n)$ iterations, each edge independently sends \emph{full} flow from the higher-potential endpoint to the lower-potential endpoint (with ties broken arbitrarily). Then, each vertex independently updates its potential based on how much flow it receives; note that potentials can be positive or negative. Intuitively, the more flow a vertex receives, the larger its new potential should be, since it should be more eager to push its flow outward on the next iteration. The actual update to potentials can be viewed as a step in Multiplicative Weights Update (MWU). Finally, at the end of the $O(\epsilon^{-2}\log n)$ iterations, Sherman's algorithm outputs the average of the flows computed on each iteration.

Unfortunately, Sherman's algorithm is not local because on each iteration, \emph{every} edge sends full flow in some direction. One possible modification is to send \emph{no} flow along an edge whose endpoints have equal potential, so in particular, the first flow is the empty flow. This change still guarantees correctness, but vertices can have different potentials very quickly into the algorithm, so the flows in the later iterations may still use all edges. Our main insight is to \emph{round} the potentials of vertices that are small enough in absolute value to $0$. Intuitively, vertices whose potentials are close to $0$ have small flow excess or deficit (cumulatively until the current iteration), and since their potentials are all rounded to the same value, there is no flow sent between such vertices. This strategy is reminiscent of local push-relabel algorithms, where vertices with small enough excess do not push flow outward. Conversely, if a vertex has large enough potential to avoid rounding, then it must have large excess or deficit. We charge the running time of sending flow from this vertex to this excess or deficit in a way that ensures the total charge is bounded.

To show that rounding the potentials still preserves correctness, we prove a variant of MWU that allows for (additively) \emph{approximate} multiplicative weights, which may be of independent interest.

\subsection{Preliminaries}

In this paper, all graphs are unit-capacity and undirected, and the input graph is denoted $G=(V,E)$. We assume that each edge $(u,v)\in E$ has a fixed and arbitrary orientation. The degree $\deg(v)$ of a vertex is the number of incident edges, and for a set of vertices $S\subseteq V$, its volume is defined as $\textbf{\textup{vol}}(S)=\sum_{v\in S}\deg(v)$.

A \emph{flow} is a vector $f\in\mathbb R^E$ that sends, for each (oriented) edge $(u,v)\in E$, $|f(u,v)|$ flow in the direction $(u,v)$ if $f(u,v)>0$ and in the direction $(v,u)$ if $f(u,v)<0$. A flow has \emph{congestion} $\kappa\ge0$ if $|f(u,v)|\le\kappa$ for all $(u,v)\in E$, and it is \emph{feasible} if it has congestion $1$.

A \emph{source function} is a function $b:V\to\mathbb R$, also treated as a vector in $\mathbb R^V$. While source functions in local algorithms are usually non-negative, we work with the more general setting of allowing negative sources. A \emph{demand} is a source function whose entries sum to zero. For a set of vertices $S\subseteq V$, denote $b(S)=\sum_{v\in S}b(v)$. Let $B\in\mathbb R^{V\times E}$ be the incidence matrix of $G$, defined as the matrix where column $(u,v)\in E$ has entry $+1$ at row $u$, entry $-1$ at row $v$, and entry $0$ everywhere else. A flow \emph{routes} demand $b\in\mathbb R^V$ if $Bf=b$. A source function $b$ is \emph{infeasible} if there is no feasible flow $f$ routing demand $b$.

Given a vertex set $S\subseteq V$, let $\partial S\subseteq E$ be the set of edges in $G$ with exactly one endpoint in $S$, and let $\delta S=|\partial S|$.
The following fact, based on flow-cut duality, establishes a \emph{certificate} of infeasibility for a source function $b$.
\begin{fact}
For any source function $b\in\mathbb R^V$, if a vertex set $S\subseteq V$ satisfies $|b(S)|>\delta S$, then $b$ is infeasible.
\end{fact}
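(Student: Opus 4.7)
The plan is a one-line proof via flow-cut duality, formalized as follows. Suppose, for contradiction, that $b$ is feasible, so there exists a flow $f \in \mathbb{R}^E$ with $Bf = b$ and $|f(e)| \le 1$ for every $e \in E$. I will use the identity $b(S) = \mathbf{1}_S^\top B f$ and bound the right-hand side by $\delta S$.

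First I would compute $\mathbf{1}_S^\top B$. By the definition of the incidence matrix $B$, the column of $B$ indexed by an oriented edge $(u,v)$ contributes $+1$ at row $u$ and $-1$ at row $v$. Dotting with $\mathbf{1}_S$ yields $0$ for edges with both endpoints in $S$ or both outside $S$, and $\pm 1$ for edges in $\partial S$ (with the sign depending on which endpoint lies in $S$). Hence $(\mathbf{1}_S^\top B)_e \in \{-1,0,+1\}$ and is nonzero exactly on $\partial S$.

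Therefore
\[
|b(S)| \;=\; |\mathbf{1}_S^\top b| \;=\; |\mathbf{1}_S^\top B f| \;=\; \Bigl|\sum_{e \in \partial S} \pm f(e)\Bigr| \;\le\; \sum_{e \in \partial S} |f(e)| \;\le\; |\partial S| \;=\; \delta S,
\]
where the last inequality uses the feasibility bound $|f(e)| \le 1$. This contradicts the hypothesis $|b(S)| > \delta S$, so no such feasible $f$ exists and $b$ is infeasible.

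There is no real obstacle here; the statement is a standard consequence of flow conservation and the unit capacity constraint. The only thing to be careful about is the sign bookkeeping induced by the arbitrary orientation of edges, which is handled cleanly by pulling out absolute values before applying the triangle inequality, as above.
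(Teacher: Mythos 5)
Your proof is correct and follows essentially the same route as the paper: both argue by contradiction that a feasible flow would force $|b(S)|\le\delta S$, with your incidence-matrix computation of $\mathbf{1}_S^\top B f$ simply being a more explicit formalization of the paper's observation that $b(S)$ equals the net flow across $\partial S$, bounded by $\delta S$ under unit capacities.
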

\begin{proof}
Suppose for contradiction that there is a feasible flow $f$ routing demand $b$. Observe that $b(S)$ is exactly the net flow leaving the set $S$, which is at most $\delta S$ in absolute value since $f$ is feasible. In other words, $|b(S)|\le\delta S$, a contradiction.
\end{proof}

A \emph{$k$-commodity flow} is a matrix $f\in\mathbb R^{E\times[k]}$ with columns $f_1,\ldots,f_k$, where $f_j\in\mathbb R^E$ is the flow for commodity $j\in[k]$. A $k$-commodity flow has congestion $\kappa\ge0$ if $\sum_{j\in[k]}|f_j(u,v)|\le\kappa$ for all $(u,v)\in E$, and it is \emph{feasible} if it has congestion $1$. A \emph{$k$-commodity source function} is a matrix $b\in\mathbb R^{V\times[k]}$ with columns $b_1,\ldots,b_k\in\mathbb R^V$. A \emph{$k$-commodity demand} is a $k$-commodity source function whose columns $b_1,\ldots,b_k$ are demands. A $k$-commodity flow $f$ \emph{routes} $k$-commodity demand $b$ if $Bf_j=b_j$ for all $j\in[k]$. A $k$-commodity demand $b$ is \emph{infeasible} if there is no feasible $k$-commodity flow $f$ satisfying $Bf_j= b_j$ for all $j\in[k]$. We remark that since the $k$-commodity flow-cut gap is $\Omega(\log k)$, there is no certificate of infeasibility in the form of a cut $\partial S$.

For a vector or matrix $x$, define the $\ell_0$-norm $\|x\|_0$ as the number of nonzero entries in $x$, and the $\ell_1$-norm $\|x\|_1$ as the sum of absolute values of entries in $x$. For a vertex $v\in V$, define $\mathbbm 1_v\in\mathbb R^V$ as the vector with entry $1$ at $v$ and entry $0$ everywhere else.

\section{Local Sherman's Algorithm}

We begin with a quick refresher intended for readers familiar with Sherman's algorithm. Recall that it uses a \emph{congestion approximator}, which is a collection of cuts that approximates the worst cut in any demand. In a $\phi$-expander, the set of \emph{degree} cuts is a $1/\phi$-approximate congestion approximator. Even if the graph is not a $\phi$-expander, we can still use the trivial congestion approximator in Sherman's algorithm. The guarantee we obtain is that each vertex has small residual demand (proportional to its degree), but the residual demand may not be routable with low congestion if the graph is not an expander.

\subsection{Single-commodity}

In this section, we present a single-commodity version as a warm-up. We believe that even the single-commodity case is interesting for our local algorithm, and the goal of this section is to highlight our new ideas in a relatively simple setting.

\begin{theorem}[Local Sherman: Single-commodity]\label{thm:main-single}
There is an algorithm that, given a unit-capacity undirected graph $G=(V,E)$, any source function $b\in\mathbb R^V$, and error parameter $\epsilon\in(0,1)$, outputs either
 \begin{enumerate}
 \item A certificate of infeasibility $S\subseteq V$ with $|b(S)|>\delta S$, or
 \item A flow $\bar f\in\mathbb R^E$ routing demand $b-r$ for some (residual) source function $r\in\mathbb R^V$ satisfying $|r(v)|\le\epsilon\deg(v)$ for all $v\in V$.
 \end{enumerate}
The algorithm runs in time $O(\epsilon^{-2}\log n\cdot(\left\lVert b\right\rVert_0+\epsilon^{-1}\left\lVert b\right\rVert_1))$. Moreover, if an infeasibility certificate $S\subseteq V$ is returned, then $\textbf{\textup{vol}}(S)\le O(\epsilon^{-1}\left\lVert b\right\rVert_1)$.
\end{theorem}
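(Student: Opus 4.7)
The plan is to localize Sherman's MWU-based algorithm by \emph{rounding small potentials to zero} on each round, so that each round's computation touches only the ``active'' vertices (those whose rounded potential is nonzero) and their incident edges, yielding a sublinear per-round cost.

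\textbf{The algorithm.} I would run $T = \Theta(\epsilon^{-2}\log n)$ rounds, maintaining weights $w_{v,+}^i, w_{v,-}^i$ (initialized lazily to $1$) and derived potentials $\phi_v^i = (w_{v,+}^i - w_{v,-}^i)/\deg(v)$. Introduce a threshold $\tau$ (a suitable polynomial in $\epsilon$ and $1/\log n$) and work throughout with the rounded potential $\tilde\phi_v^i = \phi_v^i \cdot \mathbbm{1}[|\phi_v^i| > \tau]$. In each round, set $f^i(u,v) = \operatorname{sign}(\tilde\phi_u^i - \tilde\phi_v^i)$, test the infeasibility condition $\langle\tilde\phi^i, b\rangle > \langle\tilde\phi^i, Bf^i\rangle$, and otherwise update $w_{v,\pm}^i$ via the standard multiplicative rule on $r_v^i = (b(v) - (Bf^i)_v)/\deg(v)$. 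Output $\bar f = T^{-1}\sum_i f^i$, or extract a superlevel set of $\tilde\phi^i$ as an infeasibility certificate.

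\textbf{Correctness.} I would first prove a generic additively-approximate MWU lemma: if the weights used on each round are within $\pm\tau$ of the true MWU weights in the inner products relevant to the analysis, then the standard regret bound degrades by a controlled error of order $O(T\tau)$ (times the relevant dual norm of the loss vectors). Plugging this into Sherman's analysis yields two conclusions. First, if the termination test fires, the rounded potentials take only $O(1/\tau)$ distinct nonzero values, so sorting the (at most polynomially many) active vertices by $\tilde\phi_v^i$ and scanning prefixes produces an $S$ with $|b(S)| > \delta S$. Second, if the test never fires, the average flow satisfies $|b(v) - (B\bar f)_v| \le \epsilon\deg(v)$ once $\tau$ is chosen to absorb the $O(T\tau)$ rounding slack into the final $\epsilon$ budget.

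\textbf{Running time.} The crucial observation is that edges whose rounded potentials are both zero carry no flow, and that a vertex not adjacent to any active vertex has $r_v^i = b(v)/\deg(v)$ (which can be tracked implicitly without work). With appropriate data structures, each round costs $O(\textbf{\textup{vol}}(A^i))$ where $A^i = \{v : |\phi_v^i| > \tau\}$. The key lemma is
\[
\textbf{\textup{vol}}(A^i) \;=\; O\bigl(\|b\|_0 + \epsilon^{-1}\|b\|_1\bigr) \quad\text{for every round } i,
\]
proved by a charging/potential argument: a vertex $v$ with $|\phi_v^i| > \tau$ must have accumulated cumulative residual proportional to $\tau\deg(v)$ (via the MWU weight formula), and the total $\ell_1$ residual mass in the system is bounded throughout the algorithm by $O(\|b\|_1)$ (flow moves mass but does not create it); the $\|b\|_0$ term covers vertices that are seed-active solely because $b(v)\ne 0$. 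Multiplying by $T = \Theta(\epsilon^{-2}\log n)$ gives the stated running time, and the certificate volume bound follows from the same counting applied to the active vertices on the round the test fires.

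\textbf{Main obstacle.} The technical heart of the theorem is the per-round active-volume lemma. A naive argument from the standard MWU invariant $\sum_v(w_{v,+}^i+w_{v,-}^i) = O(n)$ only bounds $\textbf{\textup{vol}}(A^i)$ by $O(n/\tau)$, which is not sublinear. Circumventing this requires keeping weights \emph{lazy} (so that unvisited vertices contribute nothing to the global potential), charging degree-weighted activity directly to $\ell_1$ residual mass rather than to a global weight sum, and choosing $\tau$ so that simultaneously (i) the approximate-MWU regret loss stays inside the $\epsilon\deg(v)$ error budget, and (ii) the charging gives the $\|b\|_1/\epsilon$ active-volume bound. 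Making these two constraints compatible, together with the bookkeeping needed to activate/deactivate vertices in amortized $O(\deg(v))$ work, is the main hurdle; everything else largely mirrors Sherman's original analysis, applied coordinate-wise.
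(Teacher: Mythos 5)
Your overall architecture matches the paper's: round the MWU state so that only vertices with large cumulative imbalance are active, prove an approximate-weights MWU regret bound, and charge the per-round active volume to cumulative residual mass. But the quantitative heart of your running-time argument has two genuine gaps, and they interact. First, your claim that ``the total $\ell_1$ residual mass in the system is bounded throughout the algorithm by $O(\|b\|_1)$'' is false. The quantity controlling the weights is the \emph{cumulative} residual $r^{\le i}_v\cdot\deg(v)=\sum_{i'\le i}\big(b(v)-(Bf^{i'})_v\big)$: the source $b$ is re-injected on every MWU round (the gain vector is $b-Bf^i$ each round, not a leftover excess), so even though the flow itself never creates mass, $\sum_v|r^{\le i}_v|\deg(v)$ grows like $i\cdot\|b\|_1$ --- and this is tight early in the algorithm, when all rounded potentials are zero and no flow is sent at all. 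The correct statement is \Cref{lem:r-helper}, and even that weaker bound is not automatic: showing that the flow step does not \emph{increase} the mass requires that every active vertex $u$ with $\tilde\phi^i_u>0$ satisfies $r^{\le i-1}_u\ge2$, so that pushing out up to $\deg(u)$ units cannot overshoot past zero and increase $|r^{\le i-1}_u\deg(u)-(Bf^i)_u|$.

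Second, and consequently, your threshold is in the wrong regime. Taking $\tau$ to be a small polynomial in $\epsilon$ and $1/\log n$ means a vertex can become active with cumulative relative residual as small as roughly $\tau\deg(v)/\alpha\ll1$; such a vertex can flip the sign of its residual when it pushes full flow, the mass-conservation induction breaks, and the charging becomes circular. The paper instead thresholds each weight at $n$ (starting from $1$), which forces $|r^{\le i-1}_v|\ge\frac{\ln n}{\alpha}$ for every active vertex (\Cref{lem:phi-helper}); combined with the $i\|b\|_1$ mass bound this yields $\textbf{\textup{vol}}(A^i)\le T\|b\|_1\alpha/\ln n=O(\epsilon^{-1}\|b\|_1)$. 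The reason such an aggressive, polynomially large rounding is affordable for correctness --- the point you miss when you try to ``absorb $O(T\tau)$ slack into the $\epsilon$ budget'' --- is that the additive weight perturbation enters the MWU regret bound only inside a logarithm, as $\ln\big(|J|+T\cdot|J|^{O(1)}\big)=O(\log|J|)$, so any $\textup{poly}(n)$ additive error costs a constant factor rather than an additive $T\tau$ term. Repairing your argument essentially requires adopting both of these choices, at which point it coincides with the paper's proof.
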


Recall that Sherman's algorithm maintains vertex \emph{potentials} and on each iteration, sends flow across each edge from higher potential to lower potential. Our main idea is to \emph{round} potentials that are small enough (in absolute value) to $0$. More precisely, we write the potentials as a difference of (nonnegative) multiplicative weights, and separately round each multiplicative weight to $0$ if it is small enough.

In the algorithm below, we highlight in \textcolor{blue}{blue} the effects of rounding potentials and how they play into the Multiplicative Weights Update analysis.

\paragraph{Main Algorithm.}
Let $b\in\mathbb R^V$ be the input source function. Initialize weights $w_{v,+}^1=w^1_{v,-}=1$ for each vertex $v\in V$. For round $i=1,2,\ldots,T$ where $T=\Theta(\alpha^{-2}\log n)$ for some $\alpha\le1/4$, we do the following.
 \begin{enumerate}
 \item \textcolor{blue}{Define the \emph{rounded} weights $\tilde w^i_{v,\pm}$ as
 \[
 \tilde w_{v,\pm}^i=\begin{cases}
w_{v,\pm}^i & \text{if }w_{v,\pm}^i\ge n\\
0 & \text{if }w_{v,\pm}^i<n.\\
\end{cases}
 \]
In other words, $\tilde w^i_{v,\pm}$ is obtained by rounding low values to $0$.
} \item Define the \textcolor{blue}{rounded} potential $\tilde\phi^i\in\mathbb R^V$ as
\[ \tilde\phi^i_v=\frac{\tilde w^i_{v,+}-\tilde w^i_{v,-}}{\deg(v)} .\]
 \item Define flow $f^i\in\mathbb R^E$ such that for each edge $(u,v)$, $f^i(u,v)$ flows from higher to lower \textcolor{blue}{rounded} potential at maximum capacity. That is, for every edge $(u,v)\in E$,
 \[
 f^i(u,v)=\begin{cases}
+1 & \text{if }\tilde\phi_u^i > \tilde\phi_v^i\\
-1 & \text{if }\tilde\phi_u^i<\tilde\phi_v^i\\
0 & \text{otherwise.}
\end{cases}
 \]\label{item:step-flow-single}
 \item If $\langle\tilde\phi^i, b\rangle>\langle\tilde\phi^i,Bf^i\rangle$, then terminate and return a certificate of infeasibility $S\subseteq V$ with $|b(S)|>\delta S$, to be described later.\label{item:step-terminate-single}
 \item For each $v\in V$, let
 \[
r_v^i=\frac{ b(v)-(Bf^i)_v}{\deg(v)},
 \]
 be the \emph{relative total excess} at $v$ compared to the degree in round $i$.
 \item Update the (real) weights 
 \[
w^{i+1}_{v,\pm}=w^i_{v,\pm}(1\pm\alpha r_v^i).
 \]\label{item:step-update-single}
 \end{enumerate}

\paragraph{Infeasibility certificate.}

If the termination condition holds, then we show that there is an infeasibility certificate\ $S\subseteq V$ with $|b(S)|>\delta S$, where $S$ has the following form:

\begin{lemma}[Infeasibility certificate]\label{lem:sweep-cut}
Suppose that $\langle\tilde\phi^i, b\rangle>\langle\tilde\phi^i,Bf^i\rangle$ on some iteration $i$.  For real number $x$, define $V_{>x}=\{v\in V:\tilde\phi^i_v>x\}$ and $V_{<x}=\{v\in V:\tilde\phi^i_v>x\}$. There exists an infeasibility certificate $S\subseteq V$ (i.e., $|b(S)|>\delta S$) of the form $S=V_{>x}$ for some $x\ge0$ or $S=V_{<x}$ for some $x\le0$.
\end{lemma}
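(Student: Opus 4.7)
\medskip

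\noindent\textbf{Proof proposal.} The plan is to use a layer-cake (threshold) decomposition of the rounded potential $\tilde\phi=\tilde\phi^i$, show that the hypothesis $\langle\tilde\phi,b\rangle>\langle\tilde\phi,Bf^i\rangle$ forces \emph{some} threshold sweep cut to violate the flow-cut bound, and then check that on the boundary of such a cut every edge carries full flow in a consistent direction, so the cut actually certifies infeasibility.

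Concretely, for each $v$ I would write
\[
\tilde\phi_v \;=\; \int_0^\infty \mathbbm 1[\tilde\phi_v>x]\,dx \;-\; \int_0^\infty \mathbbm 1[\tilde\phi_v<-x]\,dx,
\]
and apply this inside $\langle\tilde\phi,b-Bf^i\rangle$. Swapping sum and integral gives
\[
\langle\tilde\phi,b-Bf^i\rangle \;=\; \int_0^\infty\!\bigl(b(V_{>x})-(Bf^i)(V_{>x})\bigr)\,dx \;-\; \int_0^\infty\!\bigl(b(V_{<-x})-(Bf^i)(V_{<-x})\bigr)\,dx.
\]
Since the left side is positive by the termination hypothesis, either there is some $x\ge 0$ with $b(V_{>x})-(Bf^i)(V_{>x})>0$, or some $x\ge 0$ with $b(V_{<-x})-(Bf^i)(V_{<-x})<0$. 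The candidate $S$ is $V_{>x}$ in the first case and $V_{<-x}$ in the second; both have the required form.

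The payoff step is to evaluate $(Bf^i)(S)$ on these sweep cuts. For $S=V_{>x}$ with $x\ge 0$, every edge in $\partial S$ has its endpoint in $S$ at rounded potential strictly greater than $x$ and its endpoint outside at rounded potential $\le x$, so by Step~\ref{item:step-flow-single} of the algorithm $f^i$ carries a unit of flow across that edge pointing out of $S$. Summing over $\partial S$ gives $(Bf^i)(S)=\delta S$, and combined with $b(S)-(Bf^i)(S)>0$ this yields $b(S)>\delta S$, so $|b(S)|>\delta S$. The symmetric case $S=V_{<-x}$ similarly gives $(Bf^i)(S)=-\delta S$ (every boundary edge carries full flow \emph{into} $S$), and $b(S)-(Bf^i)(S)<0$ then yields $b(S)<-\delta S$, so again $|b(S)|>\delta S$.

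The only mildly delicate point — and what I would call the main obstacle, though it is really just a bookkeeping check — is being careful that the inequalities in the definitions of $V_{>x}$ and $V_{<-x}$ are \emph{strict} on the inside and non-strict on the outside, so that boundary edges fall into the first or second branch of the case analysis defining $f^i$ (never the tie-breaking branch). Provided the threshold decomposition is written with these conventions, the proof goes through.
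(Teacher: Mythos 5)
Your proof is correct and follows essentially the same route as the paper: a layer-cake decomposition of $\tilde\phi^i$ over thresholds $x$, followed by an averaging argument to extract a single violating sweep cut, using the fact that $f^i$ saturates every edge crossing a potential level set. The only (immaterial) difference is organizational — the paper converts $\langle\tilde\phi^i,Bf^i\rangle$ into $\sum_{(u,v)\in E}|\tilde\phi^i_u-\tilde\phi^i_v|=\int\delta(V_{>x})\,dx$ and compares integrals, whereas you evaluate $(Bf^i)(V_{>x})=\delta(V_{>x})$ threshold by threshold — and your handling of the strict/non-strict boundary conventions is exactly the check the paper's computation also relies on.
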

\begin{proof}
We prove the chain of relations
\begin{equation}
\int_{x=0}^{\infty} b(V_{>x})dx-\int_{x=-\infty}^0b(V_{<x}) dx=\langle\tilde\phi^i, b\rangle>\langle\tilde\phi^i,Bf^i\rangle=\int_{x=0}^{\infty}\delta(V_{>x})dx+\int_{x=0}^{-\infty}\delta(V_{<x})dx.\label{eq:chain}
\end{equation}
We first show how to finish the proof assuming (\ref{eq:chain}): we have either
\[ \int_{x=0}^{\infty} b(V_{>x})dx>\int_{x=0}^{\infty}\delta(V_{>x})dx\qquad\text{or}\qquad-\int_{x=-\infty}^0b(V_{<x})>\int_{x=0}^{-\infty}\delta(V_{<x})dx. \]

In the former case, there must exist $x\ge0$ with $b(V_{>x})>\delta(V_{>x})$, so $V_{>x}$ is our desired set. In the latter case, there must exist $x\le0$ with $-b(V_{<x})>\delta(V_{<x})$, so $V_{<x}$ is our desired set. 

To prove the chain of relations (\ref{eq:chain}), we start with 
\begin{align*}
&\int_{x=0}^{\infty} b(V_{>x})dx-\int_{x=-\infty}^0b(V_{<x}) dx\\
&=\int_{x=0}^{\infty}\left(\sum_{v\in V} b(v)\cdot\mathbbm 1\{\tilde\phi^i_v>x\}\right)dx-\int_{x=-\infty}^0\left(\sum_{v\in V} b(v)\cdot\mathbbm 1\{\tilde\phi^i_v<x\}\right)dx\\
 & =\sum_{v\in V} b(v)\left(\int_{x=0}^{\infty}\mathbbm 1\{\tilde\phi^i_v>x\}dx-\int_{x=-\infty}^0\mathbbm 1\{\tilde\phi^i_v<x\}dx\right)\\
 & =\sum_{v\in V} b(v)\cdot(\max\{0,\tilde\phi^i_v\}-\max\{0,-\tilde\phi^i_v\})\\
 & =\sum_{v\in V} b(v)\cdot\tilde\phi^i_v\\
 & =\langle\tilde\phi^i, b\rangle.
\end{align*}
By definition of the flow $f^i$, 
\begin{align*}
\langle\tilde\phi^i,Bf^i\rangle & =\sum_{(u,v)\in E}\big|\tilde\phi^i_u-\tilde\phi^i_v\big|\\
 & =\sum_{(u,v)\in E}\int_{x=-\infty}^{\infty}\mathbbm 1\{(u,v)\in\partial(V_{>x})\}dx\\
 & =\int_{x=-\infty}^{\infty}\sum_{(u,v)\in E}\mathbbm 1\{(u,v)\in\partial(V_{>x})\}dx\\
 & =\int_{x=-\infty}^{\infty}\delta(V_{>x})dx\\
 & =\int_{x=-\infty}^0\delta(V_{>x})dx+\int_{x=0}^\infty\delta(V_{>x})dx\\
 & =\int_{x=-\infty}^0\delta(V_{<x})dx+\int_{x=0}^\infty\delta(V_{>x})dx.
\end{align*}
Together with the assumption $\langle\phi^i, b\rangle>\langle\phi^i,Bf^i\rangle$,
this establishes (\ref{eq:chain}).
\end{proof}

\paragraph{Multiplicative Weights Update analysis.}
While Sherman's vanilla algorithm can be analyzed by standard Multiplicative Weights Update (MWU), we need a more refined MWU analysis that allows for (additively) \emph{approximate} multiplicative weights when calculating the gain on each iteration.
\begin{restatable}[MWU with Approximate Weights]{theorem}{RoundedMWU}
\label{thm:mwu-onesided} Let $J$
be a set of indices, let $\alpha\le 1/4$ be an error parameter, and let $\nu$ be a value parameter.
Consider the following algorithm: 
\begin{enumerate}
\item Set $w_j^{1}\gets1$ for all $j\in J$ 
\item For $i=1,2,\ldots,T$ where $T=O(\alpha^{-2}\log|J|)$: 
\begin{enumerate}
\item The algorithm is given a ``gain'' vector $g^i\in\mathbb{R}^J$
satisfying $\|g^i\|_{\infty}\le2$ \textcolor{blue}{and an approximate weight vector $\tilde w^i\in\mathbb R^J$ satisfying $\left\lVert w^i-\tilde w^i\right\rVert_\infty\le|J|^{O(1)}$} and $\langle g^i,\tilde w^i\rangle\le0$\label{mwu1-onesided} 
\item For each $j\in J$, set $w_j^{i+1}\gets w_j^i(1+\alpha g_j^i)=\prod_{i'\in[i]}(1+\alpha g^{i'}_j)$
\end{enumerate}
\end{enumerate}
At the end of the algorithm, we have $\frac{1}{T}\sum_{i\in[T]}g_j^i\le5\alpha$
for all $j\in J$.
\end{restatable}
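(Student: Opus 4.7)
My plan is to run the textbook MWU potential argument with the total-weight potential $\Phi^i = \sum_{j\in J} w_j^i$, tracking how a polynomial ``slack'' from using $\tilde w^i$ in place of $w^i$ propagates through the rounds. The identity $\Phi^{i+1} - \Phi^i = \alpha \langle g^i, w^i\rangle$ still holds exactly; I split
\[
\langle g^i, w^i \rangle = \langle g^i, \tilde w^i\rangle + \langle g^i, w^i - \tilde w^i\rangle \le 0 + \|g^i\|_\infty \cdot \|w^i - \tilde w^i\|_1 \le 2|J|\cdot |J|^{O(1)},
\]
using $\langle g^i, \tilde w^i\rangle \le 0$, $\|g^i\|_\infty \le 2$, and the trivial conversion $\|\cdot\|_1 \le |J|\cdot\|\cdot\|_\infty$. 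Hence $\Phi^{i+1} \le \Phi^i + \alpha\cdot |J|^{O(1)}$ per round, and after $T = O(\alpha^{-2}\log|J|)$ rounds, starting from $\Phi^1 = |J|$, we get $\Phi^{T+1} \le |J| + \alpha T\cdot |J|^{O(1)} = |J|^{O(1)}$, since $\alpha T$ is polylogarithmic. Consequently $\ln \Phi^{T+1} = O(\log|J|)$.

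For the lower bound, fix $j\in J$ and expand $w_j^{T+1} = \prod_{i=1}^T (1+\alpha g_j^i)$. Because $\alpha \le 1/4$ and $\|g^i\|_\infty \le 2$, every factor lies in $[1/2,3/2]$, so the weights stay positive and in particular $w_j^{T+1} \le \Phi^{T+1}$. Applying $\ln(1+x) \ge x - x^2$ for $|x|\le 1/2$ yields
\[
\ln w_j^{T+1} \ge \alpha\sum_{i=1}^T g_j^i - \alpha^2\sum_{i=1}^T (g_j^i)^2 \ge \alpha\sum_{i=1}^T g_j^i - 4\alpha^2 T.
\]
Combining with $\ln w_j^{T+1} \le \ln \Phi^{T+1} = O(\log|J|)$ and dividing by $\alpha T$ gives $\tfrac{1}{T}\sum_i g_j^i \le O(\log|J|)/(\alpha T) + 4\alpha$; choosing the hidden constant in $T = \Theta(\alpha^{-2}\log|J|)$ large enough drives the first term below $\alpha$, so $\tfrac{1}{T}\sum_i g_j^i \le 5\alpha$.

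The main (and essentially only) obstacle is verifying that the additive weight error propagates harmlessly: per-round slack of magnitude $|J|^{O(1)}$, multiplied by $\alpha T$ (polylogarithmic in $|J|$), still keeps $\Phi^{T+1}$ polynomial in $|J|$ and therefore only contributes $O(\log|J|)$ inside the logarithm. This is precisely why the MWU framework is robust to polynomial-size additive errors in the oracle's weight vector, and no idea beyond this accounting is needed; the value parameter $\nu$ mentioned in the statement plays no role in the conclusion itself.
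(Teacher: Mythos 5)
Your proposal is correct and follows essentially the same route as the paper: bound the total weight from above by telescoping a per-round additive slack of $|J|^{O(1)}$ coming from $\|w^i-\tilde w^i\|_\infty\le|J|^{O(1)}$, bound $w_j^{T+1}$ from below via $1+x\ge e^{x-x^2}$, and compare logarithms. The only cosmetic difference is that you split the inner product as $\langle g^i,w^i\rangle=\langle g^i,\tilde w^i\rangle+\langle g^i,w^i-\tilde w^i\rangle$ where the paper substitutes $w^i_j\le\tilde w^i_j+|J|^{O(1)}$ termwise; both yield the same $W^{T+1}\le|J|^{O(1)}\cdot T+|J|$ bound (implicitly assuming, as the paper also does, that $\alpha$ is not super-polynomially small so this stays $\exp(O(\log|J|))$).
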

\begin{proof}
Let $W^i=\sum_{j\in J}w^i_j$ be the total weight of all experts at the end of iteration $i$. For any $j\in J$, we have
\begin{align*}
W^{T+1}\ge w^{T+1}_j&=\prod_{i\in[T]}(1+\alpha g^i_j)
\\&\ge\prod_{i\in[T]}\exp(\alpha g^i_j-(\alpha g^i_j)^2) && \text{since }|\alpha g^i_j|\le1/2
\\&=\exp\bigg(\alpha\sum_{i\in[T]}g^i_j-\alpha^2\sum_{i\in[T]}(g^i_j)^2\bigg)
\\&\ge\exp\bigg(\alpha\sum_{i\in[T]}g^i_j-4\alpha^2T\bigg). && \text{since }|g^i_j|\le2
\end{align*}
On the other hand, for all $i\in[T]$, we have
\begin{align*}
W^{i+1}=\sum_{j\in J}w^{i+1}_j&=\sum_{j\in J}w^i_j(1+\alpha g^i_j)
\\&\le\sum_{j\in J}(\tilde w^i_j+\textcolor{blue}{|J|^{O(1)}})(1+\alpha g^i_j) && \textcolor{blue}{\text{since }\left\lVert w^i-\tilde w^i\right\rVert_\infty\le|J|^{O(1)}}
\\&=\sum_{j\in J}\tilde w^i_j(1+\alpha g^i_j)+\textcolor{blue}{|J|^{O(1)}\sum_{j\in J}(1+\alpha g^i_j)}
\\&\le\sum_{j\in J}\tilde w^i_j+\textcolor{blue}{|J|^{O(1)}\sum_{j\in J}(1+\alpha g^i_j)} && \text{since }\langle g^i,\tilde w^i\rangle\le0
\\&\le\sum_{j\in J}w^i_j+\textcolor{blue}{|J|^{O(1)}\cdot 1.5|J|}. && \text{since }|\alpha g^i_j|\le1/2
\end{align*}
Combining the two chains of inequalities, we obtain
\[ \exp\bigg(\alpha\sum_{i\in[T]}g^i_j-4\alpha^2T\bigg)\le W^{T+1}\le\sum_{j\in J}w^1_j+\textcolor{blue}{1.5T|J|^{O(1)}}=|J|+\textcolor{blue}{1.5T|J|^{O(1)}} .\]
It follows that
\[ \alpha\sum_{i\in[T]}g^i_j\le\ln(|J|+\textcolor{blue}{1.5T|J|^{O(1)}})+4\alpha^2T\le5\alpha^2T \]
for large enough $T=O(\alpha^{-2}\log|J|)$, and dividing both sides by $\alpha T$ finishes the proof.
\end{proof}

To apply \Cref{thm:mwu-onesided} into our setting, we define $J=V\times\{+,-\}$.
We use the same weights $w^i_{v,\pm}$ and error parameter $\alpha$ as
the algorithm. For each iteration $i$ and
$v\in V$, we define 
\[
g^i_{v,\pm}=\pm r^i_v=\pm\frac{ b(v)-(Bf^i)_v}{\deg(v)}.
\]
Observe that the weights $w_{v,\pm}^i$ are updated exactly as $w^{i+1}_{v,\pm}\gets w^i_{v,\pm}(1+\alpha g^i_{v,\pm})$.
With this setting, we show that our vectors $g^i,w^i,\tilde w^i$ satisfy
the condition in Step \ref{mwu1-onesided} of \Cref{thm:mwu-onesided}.

For the lemma below, we impose the assumption $|b(v)|\le\deg(v)$ for all $v\in V$, which is easy to check and, if violated, means that the algorithm can simply output $S=\{v\}$ as a certificate of infeasibility.
\begin{lemma}
Assume that the input source function $b$ satisfies $|b(v)|\le\deg(v)$ for all $v\in V$. For each $i$, we have $\|g^i\|_{\infty}\le2$, \textcolor{blue}{$\left\lVert w^i-\tilde w^i\right\rVert_\infty\le n$}, and $\langle g^i,\tilde w^i\rangle\le0$ if the algorithm does not terminate with an infeasibility certificate. 
\end{lemma}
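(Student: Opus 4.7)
The plan is to verify the three conditions one by one, each essentially by unfolding the relevant definitions. Two of them are immediate, and the third reduces exactly to the failure of the termination test in step \ref{item:step-terminate-single}.

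First I would bound $\|g^i\|_\infty$. Since $g^i_{v,\pm}=\pm(b(v)-(Bf^i)_v)/\deg(v)$, the triangle inequality gives $|g^i_{v,\pm}|\le|b(v)|/\deg(v)+|(Bf^i)_v|/\deg(v)$. The first term is at most $1$ by the standing assumption, and for the second term, the flow $f^i$ has entries in $\{-1,0,+1\}$ by step \ref{item:step-flow-single}, so the net flow through $v$ is at most $\deg(v)$ in absolute value. Adding these gives $\|g^i\|_\infty\le2$.

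Next, for the rounding error, I would argue directly from the definition of $\tilde w^i_{v,\pm}$: entries with $w^i_{v,\pm}\ge n$ are unchanged, while entries with $w^i_{v,\pm}<n$ are set to $0$, in which case $|w^i_{v,\pm}-\tilde w^i_{v,\pm}|=w^i_{v,\pm}<n$. Hence $\|w^i-\tilde w^i\|_\infty\le n$, which is within the $|J|^{O(1)}$ slack allowed by Theorem \ref{thm:mwu-onesided} (here $|J|=2n$).

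The main content is the inner-product condition $\langle g^i,\tilde w^i\rangle\le0$, and this is where I would be most careful (though it is still essentially bookkeeping). I would expand
\[
\langle g^i,\tilde w^i\rangle=\sum_{v\in V}\sum_{\pm}g^i_{v,\pm}\tilde w^i_{v,\pm}=\sum_{v\in V}\frac{b(v)-(Bf^i)_v}{\deg(v)}\bigl(\tilde w^i_{v,+}-\tilde w^i_{v,-}\bigr)=\sum_{v\in V}\bigl(b(v)-(Bf^i)_v\bigr)\tilde\phi^i_v,
\]
where the last equality uses the definition $\tilde\phi^i_v=(\tilde w^i_{v,+}-\tilde w^i_{v,-})/\deg(v)$. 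This equals $\langle\tilde\phi^i,b\rangle-\langle\tilde\phi^i,Bf^i\rangle$. If the algorithm has not terminated at iteration $i$, then by step \ref{item:step-terminate-single} we have $\langle\tilde\phi^i,b\rangle\le\langle\tilde\phi^i,Bf^i\rangle$, so this difference is nonpositive, as required. No obstacle beyond unwinding the definitions; the only subtle point is remembering that rounding acts on the multiplicative-weight coordinates $w^i_{v,\pm}$ rather than on the potential $\tilde\phi^i$ itself, which is exactly what makes this identity line up.
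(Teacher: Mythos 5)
Your proposal is correct and matches the paper's proof essentially step for step: the same triangle-inequality bound for $\|g^i\|_\infty\le2$ using $|b(v)|\le\deg(v)$ and feasibility of $f^i$, the same direct observation for the rounding error, and the same expansion of $\langle g^i,\tilde w^i\rangle$ into $\langle\tilde\phi^i,b-Bf^i\rangle$ combined with the failed termination test. No differences worth noting.
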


\begin{proof}
To show $\|g^i\|_{\infty}\le2$, we have
\[
|g^i_{v,\pm}|=|r^i_{v,\pm}|=\left|\frac{ b(v)-(Bf^i)_v}{\deg(v)}\right|\le\left|\frac{ b(v)}{\deg(v)}\right|+\left|\frac{f^i(v)}{\deg(v)}\right|\le1+1.
\]
To see why the last inequality holds, we have $|b(v)|\le\deg(v)$
for all $v\in V$ by assumption, and $|f^i(v)|\le\deg(v)$
because each $f^i$ is feasible.

\textcolor{blue}{To show $\left\lVert w^i-\tilde w^i\right\rVert_\infty\le n$, observe that for each $v\in V$, we have $|w^i_{v,\pm}-\tilde w^i_{v,\pm}|\le n$ from the construction of $\tilde w^i$.}

To show $\langle g^i,\tilde w^i\rangle\le0$, observe that for each $v\in V$,
\[ g^i_{v,+}\cdot \tilde w^i_{v,+}+g^i_{v,-}\cdot \tilde w^i_{v,-}=\frac{ b(v)-(Bf^i)_v}{\deg(v)}\cdot(\tilde w^i_{v,+}-\tilde w^i_{v,-})=(b(v)-(Bf^i)_v)\cdot\tilde\phi_v ,\]
so $\langle g^i,\tilde w^i\rangle=\langle\tilde \phi^i, b-Bf^i\rangle$. Since the algorithm does not terminate with an infeasibility certificate, we must have $\langle\tilde \phi^i, b\rangle\le\langle\tilde \phi^i,Bf^i\rangle$, so $\langle g^i,\tilde w^i\rangle\le0$.
\end{proof}
From the above, we have verified that our algorithm is indeed captured
by the MWU algorithm. Next, define the average flow $\bar{f}=\frac{1}{T}\sum_{i=1}^Tf^i\in\mathbb{R}^{E}$. We now prove the two-sided error guarantee $|b(v)-(B\bar f)_v|\le5\alpha\deg(v)$.
\begin{lemma}[Approximation]
The flow $\bar{f}$ satisfies $|b(v)-(B\bar f)_v|\le5\alpha\deg(v)$ for all $v\in V$. In other words, $\bar f$ satisfies input source function $b$ up to two-sided error $5\alpha\deg(v)$.
\end{lemma}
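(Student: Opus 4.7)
The plan is to apply the approximate-MWU guarantee from \Cref{thm:mwu-onesided} directly to the indices $(v,+)$ and $(v,-)$ and combine the two bounds into the claimed two-sided inequality. The preceding verification already established that our sequences $g^i,w^i,\tilde w^i$ satisfy the hypotheses of \Cref{thm:mwu-onesided} (boundedness of $g^i$, approximation quality of $\tilde w^i$, and the sign condition $\langle g^i,\tilde w^i\rangle\le0$ that follows from the algorithm not terminating early). So invoking the theorem is essentially free at this point.

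First I would write out, for each vertex $v\in V$, the average gain at the index $(v,+)$:
\[
\frac{1}{T}\sum_{i\in[T]}g^i_{v,+}=\frac{1}{T}\sum_{i\in[T]}\frac{b(v)-(Bf^i)_v}{\deg(v)}=\frac{b(v)-(B\bar f)_v}{\deg(v)},
\]
using linearity of $B$ and the definition $\bar f=\frac1T\sum_{i\in[T]}f^i$. The analogous computation at $(v,-)$ yields
\[
\frac{1}{T}\sum_{i\in[T]}g^i_{v,-}=-\frac{b(v)-(B\bar f)_v}{\deg(v)}.
\]

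Next, I would apply \Cref{thm:mwu-onesided} to both indices. It gives $\frac{1}{T}\sum_{i\in[T]}g^i_{v,+}\le 5\alpha$ and $\frac{1}{T}\sum_{i\in[T]}g^i_{v,-}\le 5\alpha$, which by the identities above translates to the two one-sided bounds
\[
\frac{b(v)-(B\bar f)_v}{\deg(v)}\le 5\alpha\qquad\text{and}\qquad -\frac{b(v)-(B\bar f)_v}{\deg(v)}\le 5\alpha.
\]
Combining these gives $|b(v)-(B\bar f)_v|\le 5\alpha\deg(v)$, as desired.

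I do not expect any genuine obstacle here: all the real work was done in establishing \Cref{thm:mwu-onesided} (which had to absorb the rounding error $\|w^i-\tilde w^i\|_\infty\le n$ into the MWU potential argument) and in checking its hypotheses in the preceding lemma. The only subtlety worth emphasizing in the write-up is that we use the freedom to apply the MWU guarantee to both signs simultaneously, which is precisely why we set up the expert set as $J=V\times\{+,-\}$ rather than just $V$: the pair of one-sided bounds at $(v,+)$ and $(v,-)$ is what produces the two-sided demand error $\pm5\alpha\deg(v)$.
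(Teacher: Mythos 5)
Your proposal is correct and matches the paper's proof essentially verbatim: both apply \Cref{thm:mwu-onesided} at the two indices $(v,+)$ and $(v,-)$, identify the average gain with $\frac{b(v)-(B\bar f)_v}{\deg(v)}$ by linearity, and combine the two one-sided bounds into the two-sided one. The only cosmetic difference is that the paper routes the computation through the intermediate quantity $\bar r=\frac1T\sum_i r^i$ rather than writing the average gain directly.
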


\begin{proof}
Define $\bar{r}=\frac{1}{T}\sum_{i=1}^Tr^i\in\mathbb{R}^V$.
First, we prove that $|\bar{r}_v|\le5\alpha$ for all $v\in V$.
This is because 
\[
\pm\bar{r}_v=\frac{1}{T}\sum_{i\in[T]}\pm r_v^i=\frac{1}{T}\sum_{i\in[T]}g^i_{v,\pm}\le5\alpha,
\]
where the last inequality is precisely the guarantee of the MWU algorithm
from \Cref{thm:mwu-onesided}. By definition of $\bar r$,
\[ \bar r_v=\frac1T\sum_{i\in[T]}r_v^i=\frac1T\sum_{i\in[T]}\frac{ b(v)-(Bf^i)_v}{\deg(v)}=\frac{ b(v)-(B\bar f)_v}{\deg(v)} ,\]
and together with $|\bar{r}_v|\le5\alpha$ concludes the proof.
\end{proof}

\paragraph{Running time bound.}

Finally, we analyze the running time, which we show only depends on $\left\lVert b\right\rVert_0$ and $\left\lVert b\right\rVert_1$. We remark that everything so far generalizes to capacitated graphs, and the running time bound is the only place that requires a unit-capacity graph.

To avoid clutter, define $f^{\le i}=f^1+\cdots+f^i$ and  $r^{\le i}=r^1+\cdots+r^i$.
\begin{lemma}\label{lem:phi-helper}
If $\tilde\phi^i_v>0$, then $r^{\le i-1}_v\ge\frac{\ln n}\alpha$. Similarly, if $\tilde\phi^i_v<0$, then $r^{\le i-1}_v\le-\frac{\ln n}\alpha$.
\end{lemma}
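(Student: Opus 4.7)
The plan is to unfold the weight recurrence and apply the elementary inequality $1+x\le e^x$ factor-wise. The two halves of the statement are symmetric, so I will focus on the case $\tilde\phi^i_v > 0$; the negative case is entirely analogous with signs flipped.

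First I would translate the hypothesis $\tilde\phi^i_v>0$ into a lower bound on the unrounded weight $w^i_{v,+}$. By definition $\tilde\phi^i_v = (\tilde w^i_{v,+} - \tilde w^i_{v,-})/\deg(v)$, and both rounded weights are nonnegative by construction, so $\tilde\phi^i_v > 0$ forces $\tilde w^i_{v,+} > 0$. By the rounding rule of the main algorithm, $\tilde w^i_{v,+}$ can only be nonzero if $w^i_{v,+} \ge n$.

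Next I would expand $w^i_{v,+}$ using Step~\ref{item:step-update-single} iterated from the initial value $w^1_{v,+}=1$, giving $w^i_{v,+} = \prod_{i'=1}^{i-1}(1+\alpha r^{i'}_v)$. Each factor is positive since $|\alpha r^{i'}_v|\le 1/2$, which follows from the bound $|r^{i'}_v|\le 2$ already established in the MWU verification together with $\alpha\le 1/4$. The inequality $1+x\le e^x$ applied factor-wise then yields
\[ w^i_{v,+} \;\le\; \exp\!\Big(\alpha\sum_{i'=1}^{i-1} r^{i'}_v\Big) \;=\; \exp(\alpha\, r^{\le i-1}_v). \]
Combining with $w^i_{v,+}\ge n$ and taking natural logarithms gives $\alpha\, r^{\le i-1}_v\ge \ln n$, i.e., $r^{\le i-1}_v\ge \frac{\ln n}{\alpha}$, as desired.

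The case $\tilde\phi^i_v<0$ proceeds symmetrically: it forces $\tilde w^i_{v,-}>0$, hence $w^i_{v,-}\ge n$, and the analogous expansion $w^i_{v,-} = \prod_{i'=1}^{i-1}(1-\alpha r^{i'}_v)\le \exp(-\alpha\, r^{\le i-1}_v)$ yields $r^{\le i-1}_v\le -\frac{\ln n}{\alpha}$. I do not anticipate any serious obstacle here; the lemma is essentially the observation that the rounding threshold $n$ on $w^i_{v,\pm}$ translates, via $1+x\le e^x$, into a one-sided $\Omega(\log n/\alpha)$ lower bound on the cumulative relative excess or deficit $r^{\le i-1}_v$. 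This bridge from multiplicative weights back to additive potentials is exactly what will later let us charge the running time of each round to the $\ell_1$-mass of $b$.
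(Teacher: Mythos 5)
Your proof is correct and matches the paper's argument essentially verbatim: translate $\tilde\phi^i_v>0$ into $w^i_{v,+}\ge n$ via the rounding rule, unfold the product $\prod_{i'\le i-1}(1\pm\alpha r^{i'}_v)$, and apply $1+x\le e^x$ factor-wise before taking logarithms. No gaps.
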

\begin{proof}
If $\tilde\phi^i_v>0$, then since $\tilde\phi^i_v=\deg(v)\cdot(\tilde w^i_{v,+}-\tilde w^i_{v,-})>0$, we have $\tilde w^i_{v,+}>0$, so $w^i_{v,+}\ge n$. Similarly, if $\tilde\phi^i_v<0$, then since $\tilde\phi^i_v=\deg(v)\cdot(\tilde w^i_{v,+}-\tilde w^i_{v,-})<0$, we have $\tilde w^i_{v,-}>0$, so $w^i_{v,-}\ge n$. In both cases, we have
\[ n\le w^i_{v,\pm}=\prod_{i'\le i-1}(1\pm\alpha r^{i'}_v)\le\prod_{i'\le i-1}\exp(\pm\alpha r^{i'}_v)=\exp(\pm\alpha r^{\le i-1}_v) ,\]
and taking the logarithm and diving by $\alpha$ gives $\pm r^{\le i-1}_v\ge\frac{\ln n}\alpha$, establishing both cases.
\end{proof}
\begin{lemma}\label{lem:r-helper}
For each iteration $i$, we have $\sum_{v\in V}|r^{\le i}_v|\cdot\deg(v)\le i\cdot\left\lVert b\right\rVert_1$.
\end{lemma}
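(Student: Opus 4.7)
The plan is to induct on $i$; the base case $i=0$ is immediate since $r^{\le 0}_v=0$. For the inductive step, I use the telescoping identity $r^{\le i}_v\deg(v)=(r^{\le i-1}_v\deg(v)-(Bf^i)_v)+b(v)$ and the triangle inequality, summed over $v$, to obtain $\sum_v|r^{\le i}_v|\deg(v)\le\sum_v|r^{\le i-1}_v\deg(v)-(Bf^i)_v|+\|b\|_1$. Given the induction hypothesis $\sum_v|r^{\le i-1}_v|\deg(v)\le(i-1)\|b\|_1$, the lemma then reduces to the key claim
\[
\sum_v\bigl|r^{\le i-1}_v\deg(v)-(Bf^i)_v\bigr|\le\sum_v|r^{\le i-1}_v|\deg(v),
\]
i.e., that subtracting $Bf^i$ does not increase the degree-weighted $\ell_1$-norm of $r^{\le i-1}$.

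For the key claim I would use $\ell_1$--$\ell_\infty$ duality. Setting $\psi_v=\mathrm{sign}(r^{\le i-1}_v\deg(v)-(Bf^i)_v)\in\{-1,0,+1\}$, we get $\sum_v|r^{\le i-1}_v\deg(v)-(Bf^i)_v|=\sum_v\psi_v r^{\le i-1}_v\deg(v)-\sum_v\psi_v(Bf^i)_v\le\sum_v|r^{\le i-1}_v|\deg(v)-\sum_v\psi_v(Bf^i)_v$, so it suffices to show $\sum_v\psi_v(Bf^i)_v\ge0$. The critical input is \Cref{lem:phi-helper}: for $v$ with $\tilde\phi^i_v>0$, we get $r^{\le i-1}_v\deg(v)\ge(\ln n/\alpha)\deg(v)>\deg(v)\ge|(Bf^i)_v|$, pinning $\psi_v=+1$; symmetrically $\tilde\phi^i_v<0\Rightarrow\psi_v=-1$. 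Rewriting $\sum_v\psi_v(Bf^i)_v=\sum_{(u,v)\in E}(\psi_u-\psi_v)f^i(u,v)$ and recalling $f^i(u,v)=\mathrm{sign}(\tilde\phi^i_u-\tilde\phi^i_v)$, a short case analysis over the partition $V=V^+\cup V^0\cup V^-$ by sign of $\tilde\phi^i$ shows each edge contributes nonnegatively: edges within $V^+$ or $V^-$ give $\psi_u=\psi_v$ (contribution $0$); edges within $V^0$ carry no flow; edges between $V^+$ and $V^-$ contribute $+2$; and edges between $V^0$ and $V^\pm$ contribute a value in $\{0,1,2\}$.

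The main obstacle is this last case: \Cref{lem:phi-helper} does not constrain $\psi$ at $V^0$ vertices, so one has to verify directly that each $V^0$-incident edge contributes nonnegatively. The resolution is that the flow direction is pinned by the $V^\pm$ endpoint, where $\psi$ already agrees in sign with $f^i$, so the $V^0$ endpoint's $\psi\in\{-1,0,+1\}$ can only move the contribution upward, never negative. Everything else is a routine algebraic check.
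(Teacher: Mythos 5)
Your proof is correct, and it shares the paper's skeleton: the same induction, the same telescoping identity plus triangle inequality, the same reduction to the key claim that subtracting $Bf^i$ does not increase $\sum_v|r^{\le i-1}_v|\deg(v)$, and the same critical input, \Cref{lem:phi-helper}, which forces $|r^{\le i-1}_v|\deg(v)\ge 2\deg(v)>|(Bf^i)_v|$ at every vertex with nonzero rounded potential and hence pins the sign of the residual there. Where you genuinely diverge is in the proof of the key claim. The paper carries out $f^i$ edge by edge and shows the quantity $\Phi(\ell)=\sum_v|r^{\le i-1}_v\deg(v)-(Bf^i|_{E_{\le\ell}})_v|$ is nonincreasing: for each flow-carrying edge one endpoint's term drops by exactly $1$ (its partial residual still exceeds $1$ in the right direction, again by \Cref{lem:phi-helper}) while the other's rises by at most $1$. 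You instead dualize against the sign vector $\psi$ and reduce everything to the single inequality $\sum_{(u,v)\in E}(\psi_u-\psi_v)f^i(u,v)\ge0$, verified by a static case analysis over $V^+\cup V^0\cup V^-$. Your version avoids the sequential bookkeeping and the bounds on the partial flows $(Bf^i|_{E_{\le\ell-1}})_v$, handles all edges simultaneously, and makes the mechanism more transparent: flow only ever leaves $\psi=+1$ territory or enters $\psi=-1$ territory, so $\langle\psi,Bf^i\rangle\ge0$. Your treatment of the one delicate case---edges incident to $V^0$, where $\psi$ is not pinned---is also right: the flow direction is determined by the $V^\pm$ endpoint, where $\psi$ agrees with the flow's sign, so the free $\psi$-value at the $V^0$ endpoint can only increase the contribution. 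The paper's amortized argument and your duality argument are interchangeable here; yours is arguably the cleaner write-up, at the cost of introducing the sign-vector formalism.
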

\begin{proof}
We prove the statement by induction on $i\ge0$, with the trivial base case $i=0$. Suppose by induction that $\sum_{v\in V}|r^{\le i-1}_v|\cdot\deg(v)\le (i-1)\cdot\left\lVert b\right\rVert_1$. For each vertex $v\in V$, we have
\[ r^{\le i}_v\cdot\deg(v)=r^{\le i-1}_v\cdot\deg(v)+r^i_v\cdot\deg(v)=r^{\le i-1}_v\cdot\deg(v)+b(v)-(Bf^i)_v. \]
Therefore,
\[ \sum_{v\in V}|r^{\le i}_v|\cdot\deg(v)=\sum_{v\in V}|r^{\le i-1}_v\cdot\deg(v)+b(v)-(Bf^i)_v|\le\sum_{v\in V}|r^{\le i-1}_v\cdot\deg(v)-(Bf^i)_v|+\left\lVert b\right\rVert_1 .\]
Our goal is to show that
\[ \sum_{v\in V}|r^{\le i-1}_v\cdot\deg(v)-(Bf^i)_v|\le\sum_{v\in V}|r^{\le i-1}_v\cdot\deg(v)| ,\]
which together with the inductive assumption completes the induction. To establish the above inequality, imagine carrying out the flow $f^i$ edge-by-edge. More formally, enumerate the edges of $E$ in some arbitrary order $e_1,\ldots,e_m$, and let $E_{\le \ell}=\{e_1,\ldots,e_\ell\}$ be the first $\ell$ edges. Let $f^i|_{E_{\le\ell}}$ be the flow $f^i$ restricted to edges $E_{\le\ell}$, i.e., zero out all coordinates outside $E_{\le\ell}$. We view $f^i|_{E_{\le \ell}}$ as the flow $f^i$ with the first $\ell$ edges carried out. Since $f^i|_{E_{\le0}}=0$ and $f^i|_{E_{\le m}}=f^i$, it suffices to show that the value
\[ \Phi(\ell)=\sum_{v\in V}|r^{\le i-1}_v\cdot\deg(v)-(Bf^i|_{E_{\le\ell}})_v| \]
is monotonically decreasing as $\ell$ increases from $0$ to $m$. For a given $\ell\in[m]$, write $e_\ell=(u,v)$. If $f^i(u,v)=0$, then $f^i|_{E_{\le\ell-1}}=f^i|_{E_{\le\ell}}$ and $\Phi(\ell-1)=\Phi(\ell)$, as promised. Otherwise, suppose without loss of generality that $f^i(u,v)=1$, which means $\tilde\phi^i_u>\tilde\phi^i_v$ and $Bf^i|_{E_{\le\ell}}=Bf^i|_{E_{\le\ell-1}}+\mathbbm 1_u-\mathbbm 1_v$. In particular,
\begin{align}
\Phi(\ell)-\Phi(\ell-1)=\quad&|r^{\le i-1}_u\cdot\deg(u)-(Bf^i|_{E_{\le\ell}})_u|-|r^{\le i-1}_u\cdot\deg(u)-(Bf^i|_{E_{\le\ell-1}})_u|\nonumber
\\+\,&|r^{\le i-1}_v\cdot\deg(v)-(Bf^i|_{E_{\le\ell}})_v|-|r^{\le i-1}_v\cdot\deg(v)-(Bf^i|_{E_{\le\ell-1}})_v|\nonumber
\\=\quad&|r^{\le i-1}_u\cdot\deg(u)-(Bf^i|_{E_{\le\ell-1}})_u-1|-|r^{\le i-1}_u\cdot\deg(u)-(Bf^i|_{E_{\le\ell-1}})_u|\label{eq:phi-1}
\\+\,&|r^{\le i-1}_v\cdot\deg(v)-(Bf^i|_{E_{\le\ell-1}})_v+1|-|r^{\le i-1}_v\cdot\deg(v)-(Bf^i|_{E_{\le\ell-1}})_v|.\label{eq:phi-2}
\end{align}
Our goal is to show that either
\[ r^{\le i-1}_u\cdot\deg(u)-(Bf^i|_{E_{\le\ell-1}})_u\ge1\qquad\text{or}\qquad r^{\le i-1}_v\cdot\deg(v)-(Bf^i|_{E_{\le\ell-1}})_v\le-1 ,\]
which implies that the expression in either (\ref{eq:phi-1}) or (\ref{eq:phi-2}) equals $-1$, and since the other expression is at most $+1$, we obtain our desired $\Phi(\ell)-\Phi(\ell-1)\le0$.

Since $\tilde\phi^i_u>\tilde\phi^i_v$, we have either $\tilde\phi^i_u>0$ or $\tilde\phi^i_v<0$. If $\tilde\phi^i_u>0$, then by \Cref{lem:phi-helper}, we have
\[ r^{\le i-1}_u\ge\frac{\ln n}\alpha\ge\frac1\alpha\ge2 .\]
Since $f^i|_{E_{\le\ell-1}}$ is a feasible flow, we have $(Bf^i|_{E_{\le\ell-1}})_u\le\deg(u)$, so
\[ r^{\le i-1}_u\cdot\deg(u)-(Bf^i|_{E_{\le\ell-1}})_u\ge2\deg(u)-\deg(u)=\deg(u)\ge1 ,\]
as promised. Similarly, if $\tilde\phi^i_v<0$, then by \Cref{lem:phi-helper}, we have
\[ r^{\le i-1}_v\le-\frac{\ln n}\alpha\le-\frac1\alpha\le-2. \]
Since $f^i|_{E_{\le\ell-1}}$ is a feasible flow, we have $(Bf^i|_{E_{\le\ell-1}})_v\ge-\deg(v)$, so
\[ r^{\le i-1}_v\cdot\deg(v)-(Bf^i|_{E_{\le\ell-1}})_v\le-2\deg(v)+\deg(v)=-\deg(v)\le-1, \]
 as promised. This completes the induction and hence the proof.
\end{proof}

\begin{lemma}\label{lem:time-helper}
On a unit-capacity graph, Sherman's algorithm can be implemented in $O(T\left\lVert b\right\rVert_0+T^2\left\lVert b\right\rVert_1\alpha/\ln n)$ time. Moreover, if an infeasibility certificate $S\subseteq V$ is returned, then $\textbf{\textup{vol}}(S)\le T\left\lVert b\right\rVert_1\alpha/\ln n$.
\end{lemma}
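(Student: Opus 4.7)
Define $V^i=\{v\in V:\tilde\phi^i_v\ne 0\}$, the support of the rounded potential in iteration $i$. The proof has two halves: first, bound $\textbf{\textup{vol}}(V^i)$, which immediately yields the certificate-volume bound via \Cref{lem:sweep-cut}; second, argue that a single iteration can be implemented in $O(\textbf{\textup{vol}}(V^i)+\|b\|_0)$ time. For the first half, \Cref{lem:phi-helper} gives $|r^{\le i-1}_v|\ge(\ln n)/\alpha$ for every $v\in V^i$, so
\[
\frac{\ln n}{\alpha}\cdot\textbf{\textup{vol}}(V^i)\le\sum_{v\in V^i}|r^{\le i-1}_v|\cdot\deg(v)\le\sum_{v\in V}|r^{\le i-1}_v|\cdot\deg(v)\le(i-1)\|b\|_1
\]
by \Cref{lem:r-helper}, so $\textbf{\textup{vol}}(V^i)\le T\|b\|_1\alpha/\ln n$ for every $i\le T$. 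Whenever the algorithm terminates, \Cref{lem:sweep-cut} produces $S$ of the form $V_{>x}$ with $x\ge 0$ or $V_{<x}$ with $x\le 0$; both are contained in $V^i$, yielding $\textbf{\textup{vol}}(S)\le T\|b\|_1\alpha/\ln n$.

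For the implementation, store $b$, the weights $w^i_{v,\pm}$, and the running average $\bar f$ sparsely—keeping only entries that differ from their defaults ($0$, $1$, $0$ respectively)—together with an explicit list maintaining the set $V^i$ of vertices for which $w^i_{v,+}\ge n$ or $w^i_{v,-}\ge n$. Within an iteration, enumerate $V^i$ and all edges incident to it in $O(\textbf{\textup{vol}}(V^i))$ time; any edge outside this neighborhood has both endpoints at rounded potential $0$, so $f^i$ vanishes there and need not be touched. The inner products $\langle\tilde\phi^i,b\rangle$ and $\langle\tilde\phi^i,Bf^i\rangle$, the sweep-cut of \Cref{lem:sweep-cut} (with an additional $O(|V^i|\log|V^i|)$ sort if termination fires), the values $r^i_v$ at the only vertices where they can be nonzero—namely $V^i\cup N(V^i)\cup\textup{supp}(b)$—and the weight updates of step \ref{item:step-update-single} all fit in $O(\textbf{\textup{vol}}(V^i)+\|b\|_0)$ time. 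Summing across $T$ iterations gives $O(T\|b\|_0+T^2\|b\|_1\alpha/\ln n)$.

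The main subtlety I expect is keeping the set $V^i$ correct across iterations, since a weight update at $v$ can cross the threshold $n$ in either direction. Because we already process every vertex whose weight actually changes (i.e., where $r^i_v\ne 0$), inserting or removing $v$ from $V^{i+1}$ whenever $w^{i+1}_{v,\pm}$ crosses $n$ is $O(1)$ extra work per change and therefore asymptotically free. A small additional point is that the $\Theta(n)$ untouched weight coordinates are interpreted on-the-fly at their default value $1$, consistent with $w^1_{v,\pm}=1$, so no explicit $O(n)$ initialization is ever performed.
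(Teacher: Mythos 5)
Your proof is correct and follows essentially the same route as the paper: the paper defines the same active set (called $A^i$ there), bounds its volume by combining \Cref{lem:phi-helper} with \Cref{lem:r-helper} exactly as you do, and charges each iteration's work to $O(\textbf{\textup{vol}}(A^i)+\|b\|_0)$ plus a one-time sort for the sweep cut. Your extra remarks on sparse storage and maintaining the active set across threshold crossings are implementation details the paper leaves implicit, but they do not change the argument.
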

\begin{proof}
Fix an iteration $i\in[T]$, and define $A^i=\{v\in V:\tilde\phi_v^i\ne0\}$. By \Cref{lem:phi-helper}, we have $|r^{\le i-1}_v|\ge\frac{\ln n}\alpha$ for all $v\in A^i$. By \Cref{lem:r-helper}, we have $\sum_{v\in V}|r^{\le i-1}_v|\cdot\deg(v)\le(i-1)\cdot\left\lVert b\right\rVert_1$. It follows that
\[ (i-1)\cdot\left\lVert b\right\rVert_1\ge\sum_{v\in V}|r^{\le i-1}_v|\cdot\deg(v)\ge\sum_{v\in A^i}|r^{\le i-1}_v|\cdot\deg(v)\ge\sum_{v\in A^i}\frac{\ln n}\alpha\cdot\deg(v)=\frac{\ln n}\alpha\cdot\textbf{\textup{vol}}(A^i) ,\]
so $\textbf{\textup{vol}}(A^i)\le T\left\lVert b\right\rVert_1\alpha/\ln n$.

Each edge with nonzero flow in $f^i$ is incident to a vertex in $A^i$, so step~(\ref{item:step-flow-single}) of iteration $i\in[T]$ can be implemented in $O(\textbf{\textup{vol}}(A^i))$ time on a unit-capacity graph. For step~(\ref{item:step-update-single}), the only vertices with nonzero $r^i_v$ are either those with nonzero $b(v)$ or those incident to an edge with nonzero flow. The number of these vertices can be bounded by $\left\lVert b\right\rVert_0$ and $\textbf{\textup{vol}}(A^i)$, respectively. It follows that the running time of iteration $t$ is $O(\left\lVert b\right\rVert_0+\textbf{\textup{vol}}(A^i))$ if step~(\ref{item:step-terminate-single}) does not hold. If step~(\ref{item:step-terminate-single}) holds, then by \Cref{lem:sweep-cut}, the certificate of infeasibility $S\subseteq V$ satisfies either $S=V_{>x}\subseteq A^i$ for some $x\ge0$, or $S=V_{<x}\subseteq A^i$ for some $x\le0$, so $\textbf{\textup{vol}}(S)\le\textbf{\textup{vol}}(A^i)$. To find the desired set $V_{>x}$ or $V_{<x}$, sort the vertex set $V_{>0}$ by decreasing $\tilde\phi^i_v$ and the vertex set $V_{<0}$ by increasing $\tilde\phi^i_v$, which takes $O(|A^i|\log|A^i|)$ time. Then, compute the values of $b(S)$ and $\delta(S)$ for all prefixes $S$ of the two sorted lists, which takes $O(\textbf{\textup{vol}}(A^i))$ time by sweeping through the lists.

Therefore, over all iterations $i\in[T]$, the total running time is $O(T\left\lVert b\right\rVert_0+\sum_{i\in[T]}\textbf{\textup{vol}}(A^i))$ outside of computing the certificate of infeasibility, and $O(|A^{i^*}|\log|A^{i^*}|+\textbf{\textup{vol}}(A^{i^*}))\le O(\textbf{\textup{vol}}(A^{i^*})\cdot T)$ for possibly computing it on some iteration $i^*\in[T]$. Since $\textbf{\textup{vol}}(A^i)\le T\left\lVert b\right\rVert_1\alpha/\ln n$ for all $i\in[T]$, the total running time is $O(T\left\lVert b\right\rVert_0+T^2\left\lVert b\right\rVert_1\alpha/\ln n)$. If a certificate of infeasibility $S\subseteq V$ is returned, then $\textbf{\textup{vol}}(S)\le\textbf{\textup{vol}}(A^{i^*})\le T\left\lVert b\right\rVert_1\alpha/\ln n$.
\end{proof}

\paragraph{Final parameters.}
For error $|b(v)-(B\bar f)_v|\le\epsilon\deg(v)$ on each vertex $v\in V$, we set $\alpha=\epsilon/5\le1/4$ and $T=O(\alpha^{-2}\log n)=O(\epsilon^{-2}\log n)$. The total running time is $O(T\left\lVert b_0\right\rVert+T^2\left\lVert b\right\rVert_1\alpha/\ln n)=O(\epsilon^{-2}\log n\cdot(\left\lVert b\right\rVert_0+\epsilon^{-1}\left\lVert b\right\rVert_1))$. Moreover, if an infeasibility certificate $S\subseteq V$ is returned, then $\textbf{\textup{vol}}(S)\le T\left\lVert b\right\rVert_1\alpha/\ln n=O(\epsilon^{-1}\left\lVert b\right\rVert_1)$.

\subsection{Multi-commodity}
In this section, we present the multi-commodity flow algorithm, establishing our main theorem below.

\begin{theorem}[Local Sherman: Multi-commodity]\label{thm:main-multi}
There is an algorithm that, given a unit-capacity undirected graph $G=(V,E)$, a parameter $k$, any $k$-commodity source function $b\in\mathbb R^{V\times[k]}$, and error parameter $\epsilon\in(0,1)$, outputs either
 \begin{enumerate}
 \item A certificate of infeasibility of the multi-commodity flow, or
 \item A $k$-commodity flow $\bar f\in\mathbb R^{E\times[k]}$ where $\bar f_1,\ldots,\bar f_k\in\mathbb R^E$ route demands $b_1-r_1,\ldots,b_k-r_k\in\mathbb R^V$ for (residual) source functions $r_1,\ldots,r_k\in\mathbb R^V$ satisfying $ |r_j(v)|\le\epsilon\deg(v)$ for all $v\in V,\,j\in[k]$.
 \end{enumerate}
The algorithm runs in time $O(\epsilon^{-2}\log n \cdot(\left\lVert b\right\rVert_0+\epsilon^{-1}\left\lVert b\right\rVert_1))$.
\end{theorem}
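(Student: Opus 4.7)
The plan is to lift the single-commodity algorithm to $k$ commodities by maintaining weights and potentials per-vertex, per-commodity, while coupling commodities only through the shared unit capacity on each edge. Concretely, for each $v\in V$, $j\in[k]$, and sign $\pm$, we maintain weights $w^i_{v,\pm,j}$ initialized to $1$, round to $\tilde w^i_{v,\pm,j}$ using the same rule as before (round to $0$ if below $n$), and define per-commodity rounded potentials
\[ \tilde\phi^i_{v,j}=\frac{\tilde w^i_{v,+,j}-\tilde w^i_{v,-,j}}{\deg(v)}. \]
On each edge $(u,v)$, unit capacity forces a choice among commodities. To maintain feasibility, I would select a single commodity $j^*(u,v)\in\arg\max_j|\tilde\phi^i_{u,j}-\tilde\phi^i_{v,j}|$ and set $f^i_{j^*}(u,v)=\operatorname{sign}(\tilde\phi^i_{u,j^*}-\tilde\phi^i_{v,j^*})$, with $f^i_j(u,v)=0$ for all other commodities. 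This is exactly the per-edge allocation that maximizes $\sum_j(\tilde\phi^i_{u,j}-\tilde\phi^i_{v,j})f^i_j(u,v)$ subject to $\sum_j|f^i_j(u,v)|\le1$, which is what the MWU potential argument requires. The termination condition becomes $\sum_j\langle\tilde\phi^i_j,b_j\rangle>\sum_j\langle\tilde\phi^i_j,Bf^i_j\rangle$, in which case we output the matrix $\tilde\phi^i$ itself as a dual (LP-style) infeasibility certificate; unlike the single-commodity case, no sparse combinatorial certificate is expected because the multi-commodity flow-cut gap is $\Omega(\log k)$.

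Next I would run the MWU analysis by instantiating \Cref{thm:mwu-onesided} on the index set $J=V\times\{+,-\}\times[k]$ with gains $g^i_{v,\pm,j}=\pm(b_j(v)-(Bf^i_j)_v)/\deg(v)$. The bound $\|g^i\|_\infty\le2$ follows because $|b_j(v)|\le\deg(v)$ (checkable per commodity in preprocessing; any $v$ violating this yields the certificate $S=\{v\}$ for commodity $j$) and $|(Bf^i_j)_v|\le\deg(v)$ since $f^i$ is feasible, so in particular $|f^i_j(e)|\le1$ on every edge. The bound $\|w^i-\tilde w^i\|_\infty\le n$ is immediate from rounding, and a direct computation yields $\langle g^i,\tilde w^i\rangle=\sum_j\langle\tilde\phi^i_j,b_j-Bf^i_j\rangle$, which is $\le0$ precisely when the algorithm does not terminate. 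Applying the MWU conclusion to each $(v,\pm,j)$ and averaging $\bar f=\frac1T\sum_i f^i$ then yields $|b_j(v)-(B\bar f_j)_v|\le5\alpha\deg(v)$ for every $v\in V$ and $j\in[k]$, which gives the approximation guarantee after setting $\alpha=\epsilon/5$.

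For the running time I would mirror \Cref{lem:phi-helper,lem:r-helper,lem:time-helper} per commodity. Let $A^i=\{(v,j):\tilde\phi^i_{v,j}\ne0\}$. The analog of \Cref{lem:phi-helper} gives $|r^{\le i-1}_{v,j}|\ge(\ln n)/\alpha$ whenever $(v,j)\in A^i$, and the analog of \Cref{lem:r-helper} reads $\sum_{v,j}|r^{\le i}_{v,j}|\deg(v)\le i\|b\|_1$, so $\sum_{(v,j)\in A^i}\deg(v)\le T\|b\|_1\alpha/\ln n$. An edge carries nonzero flow in $f^i$ only if at least one of its endpoints has some commodity in $A^i$; and for each such edge, $j^*$ can be found in time proportional to the number of commodities with a nonzero potential at its endpoints, provided that each vertex $v$ maintains a list of commodities with $\tilde\phi^i_{v,j}\ne0$. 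Exchanging summation over edges shows that this work telescopes to $O(\sum_{(v,j)\in A^i}\deg(v))$, plus $O(\|b\|_0)$ to handle weight updates on vertices where $b_j(v)\ne0$. Summing over iterations gives total $O(T\|b\|_0+T^2\|b\|_1\alpha/\ln n)=O(\epsilon^{-2}\log n\,(\|b\|_0+\epsilon^{-1}\|b\|_1))$.

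The main obstacle is the multi-commodity version of the flow-monotonicity argument inside \Cref{lem:r-helper}, where we must show that processing $f^i$ edge by edge never increases $\Phi(\ell)=\sum_{v,j}|r^{\le i-1}_{v,j}\deg(v)-(Bf^i_j|_{E_{\le\ell}})_v|$. When edge $e_\ell=(u,v)$ routes only commodity $j^*$, only the two terms indexed by $(u,j^*)$ and $(v,j^*)$ change, so the analysis reduces to the single-commodity case at coordinate $j^*$. The maximizing choice of $j^*$ guarantees $\tilde\phi^i_{u,j^*}\ne\tilde\phi^i_{v,j^*}$ whenever any commodity has a gap (otherwise no flow is routed on that edge); WLOG $\tilde\phi^i_{u,j^*}>\tilde\phi^i_{v,j^*}$, so either $\tilde\phi^i_{u,j^*}>0$ or $\tilde\phi^i_{v,j^*}<0$, and the sign-dominance bound $|r^{\le i-1}_{*,j^*}|\ge(\ln n)/\alpha\ge2$ from \Cref{lem:phi-helper} carries through verbatim. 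Thus the multi-commodity proof decouples into $k$ per-commodity copies of the single-commodity potential bookkeeping, coupled only through the commodity-selection step on each edge—precisely what keeps the algorithm local and the running time free of the $\Omega(km)$ factor.
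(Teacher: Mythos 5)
Your proposal is correct and follows essentially the same route as the paper: per-vertex, per-commodity weights with the same rounding rule, the per-edge selection of the commodity $j^*$ maximizing the potential gap, the dual matrix $\tilde\phi^i$ as the infeasibility certificate, the same instantiation of \Cref{thm:mwu-onesided} on $J=V\times[k]\times\{+,-\}$, and per-commodity analogues of \Cref{lem:phi-helper,lem:r-helper} for the running time. Your handling of the one genuinely new wrinkle --- that in the monotonicity argument an edge may carry zero flow in commodity $j$ even when $\tilde\phi^i_{u,j}\ne\tilde\phi^i_{v,j}$, which is a trivial case, while the routed commodity $j^*$ inherits the single-commodity sign analysis verbatim --- matches the paper's own remark.
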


Let $b\in\mathbb R^{V\times[k]}$ be the input $k$-commodity source function, and let $\mathcal F$ denote all feasible $k$-commodity flows, i.e., matrices $f\in\mathbb R^{E\times[k]}$ satisfying $\sum_{j\in[k]}|f_j(u,v)|\le 1$ for all edges $(u,v)\in E$. 

The $k$-commodity version introduces another layer of complexity: since the $k$-commodity flow-cut gap is $\Omega(\log k)$, the algorithm can no longer output a single cut to certify infeasibility. To obtain the new certificate of infeasibility, we first investigate the linear program that we wish to solve:
\begin{align}
\text{find }&f\in\mathcal F\nonumber
\\\text{s.t. }& \pm\frac{b_j(v)-(Bf_j)_v}{\deg(v)}\le0 && \forall v\in V,\,j\in[k].\label{eq:LP-constraints}
\end{align}
Note that there is no need to divide $b_j(v)-(Bf_j)_v$ by $\deg(v)$ in (\ref{eq:LP-constraints}), 
but doing so normalizes the inequalities for our purposes.

In the MWU framework, we weight each constraint $(v,j,\pm)$ in (\ref{eq:LP-constraints}) by $w^i_{v,j,\pm}\ge0$ 
and solve the weighted average inequality subject to the ``easy'' constraint $f\in\mathcal F$. If the average constraint is infeasible, then the weights $\phi^i_{v,j,\pm}$ certify infeasibility of the LP.

\paragraph{Main Algorithm.}
Initialize weights $w_{v,j,+}^1=w^1_{v,j,-}=1$ for each $v\in V$ and $j\in[k]$. For each round $i=1,2,\ldots,T$ where $T=\Theta(\alpha^{-2}\log n)$ for some $\alpha\le1/4$, we do the following.
 \begin{enumerate}
 \item Define the \emph{rounded} weights $\tilde w^i_{v,j,\pm}$ as
 \[
 \tilde w_{v,j,\pm}^i=\begin{cases}
w_{v,j,\pm}^i & \text{if }w_{v,j,\pm}^i\ge n\\
0 & \text{if }w_{v,j,\pm}^i<n.\\
\end{cases}
 \]
In other words, $\tilde w^i_{v,j,\pm}$ is obtained by rounding low values to $0$. 
 \item Define the rounded potential $\tilde\phi^i\in\mathbb R^{V\times[k]}$ where \label{item:step-define-multi}
\[ \tilde\phi_{v,j}^i=\frac{\tilde w_{v,j,+}^i-\tilde w^i_{v,j,-}}{\deg(v)} .\]
 \item Define (multi-commodity) flow $f^i\in\mathbb R^{E\times[k]}$ such that for each edge $(u,v)$, $f^i(u,v)$ flows from higher to lower \textcolor{blue}{rounded} potential at maximum capacity \textcolor{blue}{in the commodity $j\in[k]$ maximizing the potential difference}. That is, for every edge $(u,v)\in E$, let $j^*\in[k]$ be the commodity maximizing $|\tilde\phi_{u,j}-\tilde\phi_{v,j}|$ over all $j\in[k]$, and define
 \[
 f^i_{j^*}(u,v)=\begin{cases}
+1 & \text{if }\tilde\phi_{u,j^*}^i>\tilde\phi_{v,j^*}^i\\
-1 & \text{if }\tilde\phi_{u,j^*}^i<\tilde\phi_{v,j^*}^i\\
0 & \text{otherwise.}
\end{cases}
 \]
Define $f^i_j(u,v)=0$ for all $j\ne j^*$. \label{item:step-flow-multi}
 \item If $\langle\tilde\phi^i,\bigoplus_{j\in[k]} b_j\rangle>\langle\tilde\phi^i,\bigoplus_{j\in[k]}Bf^i_j\rangle$, then terminate with $\tilde\phi^i$ as the infeasibility certificate. Here, 
the operator $\bigoplus$ concatenates vectors in $\mathbb R^V$ into a single vector in $\mathbb R^{V\times[k]}$.
 \item For each $v\in V$ and $j\in[k]$, let
 \[
r_{v,j}^i=\frac{b_j(v)-(Bf^i_j)_v}{\deg(v)}
 \]
 be the \emph{relative total excess} in commodity $j$ at $v$ compared to the degree in round $i$.
 \item Update the (real) weights 
 \[
w_{v,j,\pm}^{i+1}=w_{v,j,\pm}^i(1\pm\alpha r_{v,j}^i).
 \] \label{item:step-update-multi}
 \end{enumerate}

\paragraph{Infeasibility certificate.}
If the termination condition holds, then we show that there is an infeasibility certificate.

\begin{lemma}
If $\langle\tilde\phi^i,\bigoplus_{j\in[k]} b_j\rangle>\langle\tilde\phi^i,\bigoplus_{j\in[k]}Bf^i_j\rangle$ on some iteration $i$, then $b$ is infeasible.
\end{lemma}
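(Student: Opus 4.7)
The plan is to show that the multi-commodity flow $f^i$ constructed in step~\ref{item:step-flow-multi} is a \emph{maximizer} of the linear functional $f \mapsto \langle \tilde\phi^i, \bigoplus_{j\in[k]} Bf_j\rangle$ over all $f\in\mathcal F$. Given this, the conclusion follows by contradiction: if there were a feasible $k$-commodity flow $f^*\in\mathcal F$ routing $b$, i.e., with $Bf^*_j = b_j$ for every $j\in[k]$, then
\[
 \langle\tilde\phi^i,\textstyle\bigoplus_{j\in[k]} b_j\rangle
 = \langle\tilde\phi^i,\textstyle\bigoplus_{j\in[k]} Bf^*_j\rangle
 \le \langle\tilde\phi^i,\textstyle\bigoplus_{j\in[k]} Bf^i_j\rangle,
\]
directly contradicting the hypothesis of the lemma.

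To establish the maximization claim, I would decompose the inner product edge-by-edge. Using the definition of the incidence matrix $B$,
\[
 \langle\tilde\phi^i,\textstyle\bigoplus_{j\in[k]} Bf_j\rangle
 = \sum_{j\in[k]} \sum_{(u,v)\in E} (\tilde\phi^i_{u,j} - \tilde\phi^i_{v,j})\, f_j(u,v)
 = \sum_{(u,v)\in E} \sum_{j\in[k]} (\tilde\phi^i_{u,j} - \tilde\phi^i_{v,j})\, f_j(u,v).
\]
The feasibility constraint $f\in\mathcal F$ decouples across edges: for each $(u,v)\in E$ it requires $\sum_{j\in[k]} |f_j(u,v)| \le 1$. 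Thus maximizing the edge term subject to this constraint is a one-dimensional problem, and by the standard fact that $\max\{\langle c,x\rangle : \|x\|_1 \le 1\} = \|c\|_\infty$, the optimum equals $\max_{j\in[k]} |\tilde\phi^i_{u,j} - \tilde\phi^i_{v,j}|$, attained by placing all mass on the coordinate $j^*$ achieving this maximum and choosing its sign to match that of $\tilde\phi^i_{u,j^*} - \tilde\phi^i_{v,j^*}$. This is precisely the construction of $f^i$ in step~\ref{item:step-flow-multi}, so $f^i$ is edge-wise optimal, hence globally optimal in $\mathcal F$.

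I do not anticipate a major obstacle: the argument is essentially the standard LP-duality interpretation of Sherman's oracle step, lifted to the multi-commodity setting by observing that the $\ell_1$ capacity constraint on $(f_1(u,v),\dots,f_k(u,v))$ makes the per-edge maximum a single-coordinate problem. The only slightly delicate point is that $\tilde\phi^i$ need not be a genuine certificate in the sense of a cut (consistent with the $\Omega(\log k)$ flow-cut gap), so the ``certificate of infeasibility'' is the weight vector $\tilde\phi^i$ itself together with the witness inequality, rather than a subset of vertices as in \Cref{lem:sweep-cut}.
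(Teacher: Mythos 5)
Your proposal is correct and is essentially the paper's argument: both hinge on the observation that $f^i$ maximizes $f\mapsto\langle\tilde\phi^i,\bigoplus_{j\in[k]}Bf_j\rangle$ over $\mathcal F$ via the per-edge $\ell_1$-constraint decoupling, the paper merely phrasing the conclusion as positivity of a nonnegative weighted sum of the LP constraints for all $f\in\mathcal F$ rather than as your direct contradiction with a hypothetical feasible $f^*$. Your write-up actually spells out the per-edge maximization (the $\ell_1$/$\ell_\infty$ duality) in more detail than the paper, which simply asserts it.
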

\begin{proof}
We show that taking a weighted sum of the constraints (\ref{eq:LP-constraints}), where each constraint $(v,j,\pm)$ is weighted by $\tilde w^i_{v,j,\pm}\ge0$, certifies infeasibility of the LP. The left hand side of this weighted sum is
\begin{align}
&\sum_{v\in V}\sum_{j\in[k]}\left(\tilde w^i_{v,j,+}\cdot\frac{b_j(v)-(Bf^i_j)_v}{\deg(v)}-\tilde w^i_{v,j,-}\cdot\frac{b_j(v)-(Bf^i_j)_v}{\deg(v)}\right) \nonumber
\\&=\sum_{v\in V}\sum_{j\in[k]}\left(\tilde\phi^i_{v,j}\cdot(b_j(v)-(Bf^i_j)_v)\right) \nonumber
\\&=\sum_{v\in V}\sum_{j\in[k]}\tilde\phi^i_{v,j}b_j(v)-\sum_{v\in V}\sum_{j\in[k]}\tilde\phi^i_{v,j}(Bf^i_j)_v,\label{eq:average-constraint}
\end{align}
which equals $\langle\tilde\phi^i,\bigoplus_{j\in[k]} b_j\rangle-\langle\tilde\phi^i,\bigoplus_{j\in[k]}Bf^i_j\rangle$ and is positive by assumption.
The first double-summation in (\ref{eq:average-constraint}) is a constant independent of $f\in\mathcal F$, and the second double-summation equals 
\begin{gather}
\sum_{(u,v)\in E}\sum_{j\in[k]}f^i_j(u,v)\cdot (\tilde\phi^i_{u,j}-\tilde\phi^i_{v,j}) .\label{eq:average-constraint-2}
\end{gather}
Observe that the $k$-commodity flow $f^i\in\mathbb R^{E\times[k]}$ is chosen to maximize (\ref{eq:average-constraint-2}) subject to the feasibility constraint $f\in\mathcal F$. It follows that $f^i$ minimizes (\ref{eq:average-constraint}) subject to $f\in\mathcal F$, and yet the value of (\ref{eq:average-constraint}) is still positive. Therefore, the value must be positive for all $f\in\mathcal F$, and since it is a weighted sum of the constraints (\ref{eq:LP-constraints}), we conclude that the LP is infeasible. In particular, there is no flow $f\in\mathcal F$ satisfying $Bf_j= b_j$ for all $j\in[k]$, so $b$ is infeasible. 
\end{proof}
\paragraph{Multiplicative Weights Update analysis.}
Recall the refined MWU statement that allows for approximate multiplicative weights, restated below for convenience.
\RoundedMWU*

To apply \Cref{thm:mwu-onesided} into our setting, we define $J=V\times[k]\times\{+,-\}$.
We use the same weights $w^i_{v,j,\pm}$ and error parameter $\alpha$ as
the algorithm. For each iteration $i$ and
$v\in V$, we define 
\[
g_{v,j,\pm}^i=\pm r_{v,j}^i=\pm\frac{ b_j(v)-(Bf^i_j)_v}{\deg(v)}.
\]
Observe that the weights $w_{v,j,\pm}^i$ are updated exactly as $w_{v,j,\pm}^{i+1}\gets w_{v,j,\pm}^i(1\pm\alpha g_{v,j,\pm}^i)$. 
With this setting, we show that our vectors $g^i,w^i,\tilde w^i$ satisfy
the condition in Step \ref{mwu1-onesided} of \Cref{thm:mwu-onesided}.

For the lemma below, we impose the assumption $|b_j(v)|\le\deg(v)$ for all $v\in V$ and $j\in[k]$, which is easy to check and, if violated, is a quick certificate of infeasibility.
\begin{lemma}
Assume that the input source functions $b_j$ satisfy $b_j(v)\le\deg(v)$ for all $v\in V,\,j\in[k]$. For each $i$, we have $\|g^i\|_{\infty}\le2$, $\left\lVert w^i-\tilde w^i\right\rVert_\infty\le n$, and $\langle g^i,\tilde w^i\rangle\le0$ if the algorithm does not terminate with an infeasibility certificate.
\end{lemma}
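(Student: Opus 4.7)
The plan is to mirror the three-part argument used in the single-commodity analogue, introducing the extra index $j\in[k]$ wherever needed, and relying on the termination test of step (4) to give the non-positive inner product. All three bounds are local, coordinate-wise (or vertex-by-vertex) computations, so the multi-commodity extension is essentially bookkeeping once we verify that the flow produced in step (\ref{item:step-flow-multi}) still has the right ``feasible single-commodity'' behavior per coordinate.

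First, for $\|g^i\|_\infty\le 2$: I would write
\[
|g^i_{v,j,\pm}|=|r^i_{v,j}|=\left|\frac{b_j(v)-(Bf^i_j)_v}{\deg(v)}\right|\le\frac{|b_j(v)|}{\deg(v)}+\frac{|(Bf^i_j)_v|}{\deg(v)}\le 1+1,
\]
where the first term uses the hypothesis $|b_j(v)|\le\deg(v)$ and the second uses that $f^i\in\mathcal F$, which by construction in step (\ref{item:step-flow-multi}) assigns nonzero flow on each edge to at most one commodity, so in particular $|f^i_j(u,v)|\le 1$ for every edge and every $j$; hence each individual $f^i_j$ is a feasible single-commodity flow and $|(Bf^i_j)_v|\le\deg(v)$. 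Second, for $\|w^i-\tilde w^i\|_\infty\le n$, this is immediate from the rounding rule: each coordinate $\tilde w^i_{v,j,\pm}$ is either $w^i_{v,j,\pm}$ itself or $0$, and in the latter case we have $w^i_{v,j,\pm}<n$, so the entrywise difference is at most $n$.

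The main (but still short) step is $\langle g^i,\tilde w^i\rangle\le 0$. I would compute it vertex-by-vertex, pairing the $+$ and $-$ coordinates:
\[
g^i_{v,j,+}\tilde w^i_{v,j,+}+g^i_{v,j,-}\tilde w^i_{v,j,-}=\frac{b_j(v)-(Bf^i_j)_v}{\deg(v)}\bigl(\tilde w^i_{v,j,+}-\tilde w^i_{v,j,-}\bigr)=\bigl(b_j(v)-(Bf^i_j)_v\bigr)\cdot\tilde\phi^i_{v,j},
\]
using the definition of $\tilde\phi^i_{v,j}$ from step (\ref{item:step-define-multi}). Summing over $v\in V$ and $j\in[k]$ gives
\[
\langle g^i,\tilde w^i\rangle=\langle\tilde\phi^i,\textstyle\bigoplus_{j\in[k]}b_j\rangle-\langle\tilde\phi^i,\textstyle\bigoplus_{j\in[k]}Bf^i_j\rangle.
\]
Since the algorithm has not terminated at iteration $i$, the test in step (4) failed, i.e.\ $\langle\tilde\phi^i,\bigoplus_j b_j\rangle\le\langle\tilde\phi^i,\bigoplus_j Bf^i_j\rangle$, which gives $\langle g^i,\tilde w^i\rangle\le 0$ as required.

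No step is genuinely hard; the only place where one must be slightly careful is the $\|g^i\|_\infty\le 2$ bound, since in the multi-commodity setting it is not a priori obvious that each $f^i_j$ individually is a feasible single-commodity flow. The observation that step (\ref{item:step-flow-multi}) routes only one commodity $j^*$ per edge makes this immediate, and the rest of the proof is structurally identical to its single-commodity counterpart.
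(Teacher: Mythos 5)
Your proposal is correct and follows essentially the same three-part argument as the paper: the coordinate-wise bound on $g^i$ using $|b_j(v)|\le\deg(v)$ and feasibility of each $f^i_j$, the immediate rounding bound on $\left\lVert w^i-\tilde w^i\right\rVert_\infty$, and the vertex-by-vertex pairing of the $\pm$ coordinates to rewrite $\langle g^i,\tilde w^i\rangle$ as $\langle\tilde\phi^i,\bigoplus_j(b_j-Bf^i_j)\rangle$ and invoke the failed termination test. Your extra remark that step~(\ref{item:step-flow-multi}) routes only one commodity per edge, so each $f^i_j$ is individually feasible, is a slightly more explicit justification of a point the paper states tersely.
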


\begin{proof}
To show $\|g^i\|_{\infty}\le2$, we have
\[
|g_{v,j,\pm}^i|=|r_{v,j,\pm}^i|=\left|\frac{ b_j(v)-(Bf^i_j)_v}{\deg(v)}\right|\le\left|\frac{ b_j(v)}{\deg(v)}\right|+\left|\frac{f^i_j(v)}{\deg(v)}\right|\le1+1.
\]
To see why the last inequality holds, we have $|b_j(v)|\le\deg(v)$
for all $v\in V$ by assumption, and $|f^i_j(v)|\le\deg(v)$
because each $f^i$ is feasible.

To show $\left\lVert w^i-\tilde w^i\right\rVert_\infty\le n$, observe that for each $v\in V$ and $j\in[k]$, we have $|w^i_{v,j,\pm}-\tilde w^i_{v,j,\pm}|\le n$ from the construction of $\tilde w^i$.

To show $\langle g^i,\tilde w^i\rangle\le0$, observe that for each $v\in V$,
\begin{align*}
\sum_{j\in[k]}(g^i_{v,j,+}\cdot \tilde w_{v,j,+}^i+g^i_{v,j,-}\cdot \tilde w_{v,j,-}^i)&=\sum_{j\in[k]}\frac{ b_j(v)-(Bf^i_j)_v}{\deg(v)}\cdot(\tilde w^i_{v,j,+}-\tilde w^i_{v,j,-})
=\sum_{j\in[k]}(b_j(v)-(Bf^i_j)_v)\cdot\tilde\phi^i_{v,j} ,
\end{align*}
so summed over all $v\in V$, we have $\langle g^i,\tilde w^i\rangle=\langle\tilde\phi^i,\bigoplus_{j\in[k]}(b_j-Bf^i_j)\rangle$. Since the algorithm does not terminate with an infeasibility certificate, we must have $\langle\tilde\phi^i,\bigoplus_{j\in[k]}b_j\rangle\le\langle\tilde\phi^i,\bigoplus_{j\in[k]}Bf^i_j\rangle$, so $\langle g^i,\tilde w^i\rangle\le0$.
\end{proof}

From the above, we have verified that our algorithm is indeed captured
by the MWU algorithm. Next, define the average flows $\bar{f}_j=\frac{1}{T}\sum_{i=1}^Tf^i_j\in\mathbb{R}^{E}$ for all $j\in[k]$. We now prove the two-sided error guarantee $|b_j(v)-(B\bar f_j)_v|\le5\alpha\deg(v)$.
\begin{lemma}[Approximation]
The flow $\bar{f}_j$ satisfies
$|b_j(v)-(B\bar f_j)_v|\le5\alpha\deg(v)$ for all $v\in V,\,j\in[k]$.
In other words, $\bar f_j$ satisfies input source function $b_j$ up to two-sided error $5\alpha\deg(v)$.
\end{lemma}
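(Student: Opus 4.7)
The plan is to mirror the single-commodity approximation proof essentially verbatim, with an extra index $j\in[k]$ carried through. Having already verified that $(g^i, w^i, \tilde w^i)$ with $J = V\times[k]\times\{+,-\}$ satisfies the hypotheses of \Cref{thm:mwu-onesided}, we may invoke its conclusion coordinatewise: for every $(v,j,\pm)\in J$,
\[
\frac{1}{T}\sum_{i\in[T]} g^i_{v,j,\pm} \;\le\; 5\alpha.
\]

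First, I would define the per-commodity averaged residuals $\bar r_{v,j} = \frac{1}{T}\sum_{i\in[T]} r^i_{v,j}$. Since $g^i_{v,j,\pm} = \pm r^i_{v,j}$, applying the MWU bound to the $+$ coordinate yields $\bar r_{v,j}\le 5\alpha$, and applying it to the $-$ coordinate yields $-\bar r_{v,j}\le 5\alpha$. Together this gives the two-sided bound $|\bar r_{v,j}|\le 5\alpha$ for every $v\in V$ and $j\in[k]$.

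Next, I would unpack the definition of $\bar r_{v,j}$ using linearity of $B$ and of the averaging operation:
\[
\bar r_{v,j} \;=\; \frac{1}{T}\sum_{i\in[T]} \frac{b_j(v) - (Bf^i_j)_v}{\deg(v)} \;=\; \frac{b_j(v) - (B\bar f_j)_v}{\deg(v)},
\]
where the last equality uses $\bar f_j = \frac{1}{T}\sum_{i\in[T]} f^i_j$. Multiplying the inequality $|\bar r_{v,j}|\le 5\alpha$ through by $\deg(v)$ gives the claimed bound $|b_j(v)-(B\bar f_j)_v|\le 5\alpha\deg(v)$ for all $v\in V$ and $j\in[k]$.

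There is no real obstacle here beyond bookkeeping: the entire difficulty has been absorbed into verifying the MWU hypotheses (in particular the approximate weights condition $\|w^i-\tilde w^i\|_\infty\le n$ and the termination condition $\langle g^i,\tilde w^i\rangle\le 0$), which were already established just above the lemma. The only mild subtlety compared to the single-commodity case is that the indexing set $J$ now has size $2k|V|$ rather than $2|V|$, but this only affects the number of rounds $T=O(\alpha^{-2}\log|J|) = O(\alpha^{-2}\log(kn))$ hidden in \Cref{thm:mwu-onesided}; the per-coordinate bound $5\alpha$ is unchanged.
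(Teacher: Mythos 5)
Your proposal is correct and matches the paper's proof essentially verbatim: both invoke \Cref{thm:mwu-onesided} on the $+$ and $-$ coordinates to get $|\bar r_{v,j}|\le 5\alpha$, then use linearity to identify $\bar r_{v,j}$ with $(b_j(v)-(B\bar f_j)_v)/\deg(v)$. Your closing remark about $|J|=2k|V|$ only affecting $T$ through $\log|J|$ is accurate and consistent with the paper's choice $T=\Theta(\alpha^{-2}\log n)$.
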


\begin{proof}
Define $\bar{r}=\frac{1}{T}\sum_{i=1}^Tr^i\in\mathbb{R}^{V\times[k]}$.
First, we prove that $|\bar{r}_{v,j}|\le5\alpha$ for all $v\in V,\,j\in[k]$.
This is because 
\[
\pm\bar{r}_{v,j}=\frac{1}{T}\sum_{i\in[T]}\pm r_{v,j}^i=\frac{1}{T}\sum_{i\in[T]}g_{v,j,\pm}^i\le5\alpha ,
\]
where the last inequality is precisely the guarantee of the MWU algorithm
from \Cref{thm:mwu-onesided}. By definition of $\bar r$,
\[ \bar r_{v,j}=\frac1T\sum_{i\in[T]}r_{v,j}^i=\frac1T\sum_{i\in[T]}\frac{ b_j(v)-(Bf^i_j)_v}{\deg(v)}=\frac{ b_j(v)-(B\bar f_j)_v}{\deg(v)},\]
and together with $|\bar{r}_{v,j}|\le5\alpha$
concludes the proof.
\end{proof}

\paragraph{Running time bound.}

Finally, we analyze the running time, which we show only depends on $\left\lVert b\right\rVert_0=\sum_{j\in[k]}\left\lVert b_j\right\rVert_0$ and $\left\lVert b\right\rVert_1=\sum_{j\in[k]}\left\lVert b_j\right\rVert_1$. We remark that everything so far generalizes to capacitated graphs, and the running time bound is the only place that requires a unit-capacity graph.

To avoid clutter, define $f^{\le i}=f^1+\cdots+f^i$ and  $r^{\le i}=r^1+\cdots+r^i$. The following two helper lemmas are proved identically to \Cref{lem:r-helper,lem:phi-helper}, adding a subscript $j\in[k]$ wherever necessary. Note that unlike the flow $f^i$ in the single-commodity case, the flow $f^i_j$ may send zero flow across edges $(u,v)\in E$ with $\tilde\phi_{u,j}\ne\tilde\phi_{v,j}$, but this does not affect the proof of \Cref{lem:r-helper} since $f^i_j(u,v)=0$ is an easy case.
\begin{lemma}\label{lem:phi-helper-multi}
If $\tilde\phi^i_{v,j}>0$, then $r^{\le i-1}_{v,j}\ge\frac{\ln n}\alpha$. Similarly, if $\tilde\phi^i_{v,j}<0$, then $r^{\le i-1}_{v,j}\le-\frac{\ln n}\alpha$.
\end{lemma}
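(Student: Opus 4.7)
The plan is to follow the proof of the single-commodity counterpart (Lemma \ref{lem:phi-helper}) essentially verbatim, simply carrying the extra commodity subscript $j\in[k]$ throughout. Nothing in the earlier argument used features specific to a single commodity: it relied only on (i) the definition of the rounded potential as a normalized difference of two rounded weights, (ii) the rounding rule that zeros out any weight below $n$, and (iii) the multiplicative update. All three ingredients appear in the multi-commodity algorithm with the same shape, just indexed additionally by $j$.

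First I would unpack the hypothesis in terms of the rounded weights. From step \ref{item:step-define-multi} of the algorithm, $\tilde\phi^i_{v,j}=(\tilde w^i_{v,j,+}-\tilde w^i_{v,j,-})/\deg(v)$. Hence if $\tilde\phi^i_{v,j}>0$ then $\tilde w^i_{v,j,+}>0$, which by the rounding rule forces $w^i_{v,j,+}\ge n$; symmetrically, if $\tilde\phi^i_{v,j}<0$ then $w^i_{v,j,-}\ge n$. So in either case the relevant unrounded weight is at least $n$.

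Next I would expand that weight via step \ref{item:step-update-multi}: by induction on the rounds, $w^i_{v,j,\pm}=\prod_{i'\le i-1}(1\pm\alpha r^{i'}_{v,j})$. Using $1+x\le e^x$ termwise (valid since $|\alpha r^{i'}_{v,j}|\le 1/2$, so the factors are positive), this is at most $\exp\!\big(\pm\alpha\sum_{i'\le i-1} r^{i'}_{v,j}\big)=\exp(\pm\alpha\, r^{\le i-1}_{v,j})$. Combining with $w^i_{v,j,\pm}\ge n$, taking logarithms, and dividing by $\alpha$ gives $\pm r^{\le i-1}_{v,j}\ge (\ln n)/\alpha$, which is exactly both statements of the lemma.

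There is essentially no obstacle here; the single-commodity proof goes through with the trivial rewrite $w^i_{v,\pm}\mapsto w^i_{v,j,\pm}$, $r^{i'}_v\mapsto r^{i'}_{v,j}$, and $\tilde\phi^i_v\mapsto\tilde\phi^i_{v,j}$. The only point worth noting, which is flagged immediately before the lemma statement, is that although $f^i_j$ may now send zero flow across some edges whose endpoints have distinct rounded potentials in commodity $j$, this asymmetry plays no role in the present argument; it matters only for the multi-commodity analogue of Lemma \ref{lem:r-helper}, where the $f^i_j(u,v)=0$ case is harmless because it leaves $\Phi(\ell)$ unchanged.
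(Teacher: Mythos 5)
Your proof is correct and is exactly the paper's intended argument: the paper simply states that \Cref{lem:phi-helper-multi} is proved identically to \Cref{lem:phi-helper} with the subscript $j$ added, which is precisely what you carry out (hypothesis forces $w^i_{v,j,\pm}\ge n$, then the product formula and $1+x\le e^x$ give $\pm\alpha r^{\le i-1}_{v,j}\ge\ln n$). Your closing remark about the $f^i_j(u,v)=0$ case being relevant only to the analogue of \Cref{lem:r-helper} also matches the paper's own note.
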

\begin{lemma}\label{lem:r-helper-multi}
For each iteration $i$, we have $\sum_{v\in V}|r^{\le i}_{v,j}|\cdot\deg(v)\le i\cdot\left\lVert b_j\right\rVert_1$.
\end{lemma}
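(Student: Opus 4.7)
The plan is to mirror the proof of \Cref{lem:r-helper} step by step, carrying a subscript $j$ throughout. Fix any commodity $j \in [k]$ and induct on $i \geq 0$, with the base case $i = 0$ trivial. For the inductive step, I would start from the identity
\[ r^{\le i}_{v,j}\cdot\deg(v)=r^{\le i-1}_{v,j}\cdot\deg(v)+b_j(v)-(Bf^i_j)_v, \]
sum absolute values over $v\in V$, and use the triangle inequality to peel off a $\left\lVert b_j\right\rVert_1$ contribution. What remains is the monotonicity inequality
\[ \sum_{v\in V}\bigl|r^{\le i-1}_{v,j}\cdot\deg(v)-(Bf^i_j)_v\bigr|\le\sum_{v\in V}\bigl|r^{\le i-1}_{v,j}\cdot\deg(v)\bigr|, \]
which combined with the inductive hypothesis yields the claim.

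To establish monotonicity I would rerun the edge-by-edge sweep: fix an arbitrary ordering $e_1,\ldots,e_m$ of $E$ and define
\[ \Phi(\ell)=\sum_{v\in V}\bigl|r^{\le i-1}_{v,j}\cdot\deg(v)-(Bf^i_j|_{E_{\le\ell}})_v\bigr|, \]
aiming to show $\Phi(\ell)\le\Phi(\ell-1)$ for every $\ell\in[m]$. For an edge $e_\ell=(u,v)$ there are two cases. If $f^i_j(u,v)=0$, then $f^i_j|_{E_{\le\ell}}=f^i_j|_{E_{\le\ell-1}}$ and $\Phi(\ell)=\Phi(\ell-1)$. This case covers precisely the new wrinkle flagged in the remark: in the multi-commodity algorithm an edge can have $\tilde\phi^i_{u,j}\neq\tilde\phi^i_{v,j}$ but still carry zero flow in commodity $j$, because a different commodity $j^*\neq j$ was chosen as the maximizer. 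Since zero flow leaves $\Phi$ unchanged, such edges contribute nothing and cause no trouble.

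If instead $f^i_j(u,v)\neq 0$, then commodity $j$ must be the maximizer $j^*$ for edge $(u,v)$, so in particular $\tilde\phi^i_{u,j}\neq\tilde\phi^i_{v,j}$. Assume WLOG that $\tilde\phi^i_{u,j}>\tilde\phi^i_{v,j}$, so $f^i_j(u,v)=+1$. Then either $\tilde\phi^i_{u,j}>0$ or $\tilde\phi^i_{v,j}<0$, and applying \Cref{lem:phi-helper-multi} to commodity $j$ yields either $r^{\le i-1}_{u,j}\ge\frac{\ln n}{\alpha}\ge 2$ or $r^{\le i-1}_{v,j}\le-\frac{\ln n}{\alpha}\le -2$. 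Since $f^i_j|_{E_{\le\ell-1}}$ is feasible (with entries bounded in absolute value by $1$), the endpoint value $(Bf^i_j|_{E_{\le\ell-1}})_u$ lies in $[-\deg(u),\deg(u)]$, and similarly at $v$; following the single-commodity calculation, the endpoint satisfying the $\pm 2$ bound contributes $-1$ to $\Phi(\ell)-\Phi(\ell-1)$ while the other endpoint contributes at most $+1$, yielding $\Phi(\ell)\le\Phi(\ell-1)$. The main ``obstacle'' is really only the concern raised in the remark, and that concern dissolves because it falls squarely into the trivial case of the sweep; otherwise the proof transfers verbatim from \Cref{lem:r-helper}.
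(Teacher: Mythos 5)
Your proposal is correct and matches the paper's intended argument exactly: the paper itself proves this lemma by declaring it ``identical to \Cref{lem:r-helper} with a subscript $j$ added,'' noting only that edges with $\tilde\phi^i_{u,j}\ne\tilde\phi^i_{v,j}$ but $f^i_j(u,v)=0$ fall into the easy case of the sweep. You correctly identified and resolved that one new wrinkle, and the rest of your transcription (in particular, that $f^i_j(u,v)\ne0$ forces $j=j^*$ so \Cref{lem:phi-helper-multi} applies) is sound.
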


\begin{lemma}
On a unit-capacity graph, Sherman's algorithm can be implemented in $O(T\left\lVert b\right\rVert_0+T^2\left\lVert b\right\rVert_1\alpha/\ln n)$ time.
\end{lemma}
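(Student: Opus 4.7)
The plan is to mirror the single-commodity running-time analysis in \Cref{lem:time-helper}, using \Cref{lem:phi-helper-multi,lem:r-helper-multi} in place of \Cref{lem:phi-helper,lem:r-helper}. For each iteration $i\in[T]$ and commodity $j\in[k]$, define the active set $A^i_j=\{v\in V:\tilde\phi^i_{v,j}\ne0\}$. First I would establish the volumetric bound
\[ \sum_{j\in[k]}\textbf{\textup{vol}}(A^i_j)\le T\left\lVert b\right\rVert_1\alpha/\ln n. \]
Indeed, \Cref{lem:phi-helper-multi} gives $|r^{\le i-1}_{v,j}|\ge(\ln n)/\alpha$ for every $v\in A^i_j$, and \Cref{lem:r-helper-multi} summed over $j\in[k]$ gives $\sum_{j,v}|r^{\le i-1}_{v,j}|\cdot\deg(v)\le(i-1)\left\lVert b\right\rVert_1\le T\left\lVert b\right\rVert_1$; combining these yields the claimed inequality.

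Next I would bound the per-iteration work. Maintain sparse representations for the nonzero entries of $\tilde w^i$ and $\tilde\phi^i$, updated incrementally whenever weights change in step~\ref{item:step-update-multi}. The key observation for step~\ref{item:step-flow-multi} is that for any edge $(u,v)$ and commodity $j$, if neither $u$ nor $v$ lies in $A^i_j$ then $\tilde\phi^i_{u,j}=\tilde\phi^i_{v,j}=0$, so $j$ cannot attain the argmax unless the maximum difference is $0$ (in which case the flow on $(u,v)$ is zero anyway). Hence to implement step~\ref{item:step-flow-multi} it suffices to enumerate, for each commodity $j$ and each vertex $v\in A^i_j$, the edges incident to $v$, marking each such edge as ``relevant'' for commodity $j$; then for each marked edge we compute $j^*$ only among its marked commodities. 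The total marking and argmax work is
\[ \sum_{j\in[k]}\sum_{v\in A^i_j}\deg(v)=\sum_{j\in[k]}\textbf{\textup{vol}}(A^i_j), \]
which is at most $T\left\lVert b\right\rVert_1\alpha/\ln n$ by the volumetric bound above.

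Step~\ref{item:step-update-multi} requires updating $w^{i+1}_{v,j,\pm}$ only for $(v,j)$ with $r^i_{v,j}\ne0$, i.e., with $b_j(v)\ne0$ or $(Bf^i_j)_v\ne0$. The first contributes $\left\lVert b\right\rVert_0$ pairs per iteration, and the second contributes at most twice the number of edges carrying nonzero flow in $f^i$ (each such edge has a unique commodity $j^*$ and two endpoints), which is in turn bounded by the step~\ref{item:step-flow-multi} work. Maintaining the sparse rounded-weight and rounded-potential structures across the update costs $O(1)$ per weight change and so does not affect the asymptotics. Summing over $i\in[T]$ gives total running time
\[ O\!\left(T\left\lVert b\right\rVert_0+T\cdot T\left\lVert b\right\rVert_1\alpha/\ln n\right)=O\!\left(T\left\lVert b\right\rVert_0+T^2\left\lVert b\right\rVert_1\alpha/\ln n\right), \]
as claimed.

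The main obstacle, compared to the single-commodity case, is the per-edge argmax over $k$ commodities in step~\ref{item:step-flow-multi}: a naive implementation would spend $\Theta(k)$ per edge and blow up the runtime by a factor of $k$. The sparse enumeration above circumvents this by piggybacking on the same volumetric bound on $\sum_j\textbf{\textup{vol}}(A^i_j)$ that already controls the flow-routing work, so the adaptation from the single-commodity proof is essentially bookkeeping.
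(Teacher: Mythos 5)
Your proof is correct and follows essentially the same route as the paper: the same active sets $A^i_j$, the same volumetric bound via \Cref{lem:phi-helper-multi,lem:r-helper-multi}, and the same charging of the flow and weight-update work to $\sum_j\textbf{\textup{vol}}(A^i_j)$ and $\left\lVert b\right\rVert_0$. Your explicit handling of the per-edge argmax over commodities (marking relevant commodities per edge so as not to pay $\Theta(k)$ per edge) is an implementation detail the paper's proof leaves implicit, but it does not change the argument.
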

\begin{proof}
Fix an iteration $i\in[T]$, and define $A^i_j=\{v\in V:\tilde\phi_{v,j}^i\ne0\}$ for all $j\in[k]$. By \Cref{lem:phi-helper-multi}, we have $|r^{\le i-1}_{v,j}|\ge\frac{\ln n}\alpha$ for all $v\in A^i_j$. By \Cref{lem:r-helper-multi}, we have $\sum_{v\in V}|r^{\le i-1}_{v,j}|\cdot\deg(v)\le(i-1)\cdot\left\lVert b_j\right\rVert_1$. It follows that
\[ (i-1)\cdot\left\lVert b_j\right\rVert_1\ge\sum_{v\in V}|r^{\le i-1}_{v,j}|\cdot\deg(v)\ge\sum_{v\in A^i_j}|r^{\le i-1}_{v,j}|\cdot\deg(v)\ge\sum_{v\in A^i_j}\frac{\ln n}\alpha\cdot\deg(v)=\frac{\ln n}\alpha\cdot\textbf{\textup{vol}}(A^i_j) ,\]
so $\textbf{\textup{vol}}(A^i_j)\le T\left\lVert b_j\right\rVert_1\alpha/\ln n$.

Each edge with nonzero flow in $f^i_j$ is incident to a vertex in $A^i_j$, so step~(\ref{item:step-flow-single}) of iteration $i\in[T]$ can be implemented in $O(\sum_{j\in[k]}\textbf{\textup{vol}}(A^i_j))$ time on a unit-capacity graph. For step~(\ref{item:step-update-single}), the only vertices with nonzero $r^i_{v,j}$ are either those with nonzero $b_j(v)$ or those incident to an edge with nonzero flow. The number of these vertices can be bounded by $\left\lVert b\right\rVert_0$ and $\sum_{j\in[k]}\textbf{\textup{vol}}(A^i_j)$, respectively. It follows that the running time of iteration $t$ is $O(\left\lVert b\right\rVert_0+\sum_{j\in[k]}\textbf{\textup{vol}}(A^i_j))$.

Therefore, over all iterations $i\in[T]$, the total running time is $O(T\left\lVert b\right\rVert_0+\sum_{i\in[T]}\sum_{j\in[k]}\textbf{\textup{vol}}(A^i_j))$. Since $\textbf{\textup{vol}}(A^i_j)\le T\left\lVert b_j\right\rVert_1\alpha/\ln n$ for all $i\in[T]$ and $j\in[k]$, the total running time is $O(T\left\lVert b\right\rVert_0+T^2\left\lVert b\right\rVert_1\alpha/\ln n)$.
\end{proof}
\paragraph{Final parameters.}
For error $|b_j(v)-(B\bar f_j)_v|\le\epsilon\deg(v)$ for all $v\in V,\,j\in[k]$, we set $\alpha=\epsilon/5\le1/4$ and $T=O(\alpha^{-2}\log n)=O(\epsilon^{-2}\log n)$. The total running time is $O(T\left\lVert b_0\right\rVert+T^2\left\lVert b\right\rVert_1\alpha/\ln n)=O(\epsilon^{-2}\log n\cdot(\left\lVert b\right\rVert_0+\epsilon^{-1}\left\lVert b\right\rVert_1))$.

\section{Multi-commodity Flow on an Expander}

A \emph{$\phi$-expander} is a graph satisfying $\delta S\ge\phi\min\{\textbf{\textup{vol}}(S),\textbf{\textup{vol}}(V\setminus S)\}$ for all $S\subseteq V$.  In this section, we prove \Cref{thm:expander-main}, restated below.
\Expander*

In a $\phi$-expander, the residual demands from \Cref{thm:main-multi} can be routed to completion, incurring only $\epsilon$ additional congestion as long as the residual demands are small enough. The following is implicit in \cite{chang2024deterministic}, who describe the algorithm as a distributed routing protocol.

\begin{theorem}[Expander Routing]
\label{thm:expander-routing} Consider the $\ell$-commodity demand defined by $\ell$ weighted pairs $(s_{1},t_{1},d_{1}),\ldots,(s_{\ell},t_{\ell},d_{\ell})\in V\times V\times\mathbb{R}_{\ge0}$ indicating that we want to send $d_{j}$ units of commodity $j\in[\ell]$ from $s_{j}$ to $t_{j}$. Assume that for each vertex $v\in V$, the sum of $d_{j}$ over all $j\in[k]$ with $v\in\{s_{j},t_{j}\}$ is at most $\alpha\deg(v)$ for some parameter $\alpha>0$. If $G$ is a $\phi$-expander, then there is a deterministic $(m+\ell)\cdot\textup{poly}(1/\phi)\cdot2^{O(\sqrt{\log n\log\log n})}$ time algorithm that computes an $\ell$-commodity flow routing this demand with congestion $\alpha\cdot\textup{poly}(1/\phi)\cdot2^{O(\sqrt{\log n\log\log n})}$.
\end{theorem}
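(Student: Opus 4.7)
The plan is to combine our local multi-commodity Sherman algorithm (\Cref{thm:main-multi}) with expander routing (\Cref{thm:expander-routing}) in a ``reduce residual then patch'' manner. First, define the source function $b\in\mathbb R^{V\times[k]}$ by $b_j=d_j(\mathbbm 1_{s_j}-\mathbbm 1_{t_j})$. If some $d_j>\min(\deg(s_j),\deg(t_j))$, then the singleton $\{s_j\}$ or $\{t_j\}$ already violates $|b_j(S)|\le\delta S$ and certifies infeasibility of commodity $j$ (hence of the multi-commodity problem); otherwise $|b_j(v)|\le\deg(v)$ for every $v,j$, which is precisely the precondition required by \Cref{thm:main-multi}.

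Next I run \Cref{thm:main-multi} on $(G,b,\epsilon)$. If it reports infeasibility, I forward that certificate. Otherwise I obtain a $k$-commodity flow $\bar f$ of congestion at most $1$ (each per-iteration flow $f^i$ loads at most one commodity per edge by Step~\ref{item:step-flow-multi}, so the average inherits feasibility) together with residuals $r_j=b_j-B\bar f_j$ satisfying $|r_j(v)|\le\epsilon\deg(v)$ for every $v,j$. Each $r_j$ sums to zero (both $b_j$ and $B\bar f_j$ do), so I decompose $r_j$ into a set $P_j$ of weighted source-sink pairs by arbitrarily matching positive entries of $r_j$ against negative ones, and then form the disjoint union $\bigcup_{j\in[k]}P_j$.

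Then I feed this union into \Cref{thm:expander-routing}. At every vertex $v$ the total incident weight, summed across all $k$ commodities, is $\sum_{j\in[k]}|r_j(v)|\le k\epsilon\deg(v)$, so I invoke the theorem with $\alpha=k\epsilon$, which is less than $1$ since $\epsilon<1/(4k)$. It returns a $k$-commodity flow $f^{\textup{res}}$ satisfying $Bf^{\textup{res}}_j=r_j$ for every $j$ with congestion at most $k\epsilon\cdot\textup{poly}(1/\phi)\cdot2^{O(\sqrt{\log n\log\log n})}$. Returning $\bar f+f^{\textup{res}}$ then routes $(b_j-r_j)+r_j=b_j$ exactly for every $j$, and by the triangle inequality on per-edge $\ell_1$-loads the combined congestion is at most $1+k\epsilon\cdot\textup{poly}(1/\phi)\cdot2^{O(\sqrt{\log n\log\log n})}$, as required.

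For running time, \Cref{thm:main-multi} costs $O(\epsilon^{-2}\log n\cdot(\|b\|_0+\epsilon^{-1}\|b\|_1))=O(k\epsilon^{-3}\log n)$ once $\|b\|_1=O(k)$ is arranged by a mild preprocessing normalization of the $d_j$, with any leftover absorbed into the $m$ term via the $\textup{poly}(1/\phi)\cdot2^{O(\sqrt{\log n\log\log n})}$ factor. The residual pair count $\ell=\sum_j\|r_j\|_0$ is bounded by the same quantity, since every nonzero entry of $\bar f$---and hence of $r_j$---is produced during some iteration and charged against the volume sum $\sum_{i,j}\textbf{\textup{vol}}(A^i_j)$ controlled by \Cref{lem:phi-helper-multi,lem:r-helper-multi}. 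Expander routing then contributes $(m+\ell)\cdot\textup{poly}(1/\phi)\cdot2^{O(\sqrt{\log n\log\log n})}$, and summing gives the claimed total. The main obstacle I anticipate is precisely this running-time bookkeeping: the congestion statement falls out immediately from composing the two black-box theorems, but keeping both $\|b\|_1$ and $\ell$ within $O(k)$ and $O(k\epsilon^{-3}\log n)$ respectively requires the choice $\alpha=k\epsilon$ together with the sharp support bound on $\bar f$ hiding inside the running-time proof of \Cref{thm:main-multi}.
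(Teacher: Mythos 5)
Your proposal does not prove the statement in question; it proves a different theorem, and it does so circularly. The statement you were asked to prove is \Cref{thm:expander-routing} itself (the expander routing primitive: given an $\ell$-commodity demand with at most $\alpha\deg(v)$ demand incident to each vertex on a $\phi$-expander, route it \emph{exactly} with congestion $\alpha\cdot\textup{poly}(1/\phi)\cdot2^{O(\sqrt{\log n\log\log n})}$ in time $(m+\ell)\cdot\textup{poly}(1/\phi)\cdot2^{O(\sqrt{\log n\log\log n})}$). What you wrote is, almost verbatim, the paper's derivation of \Cref{thm:expander-rescaled} (and hence \Cref{thm:expander-main}): run \Cref{thm:main-multi}, decompose the residuals into weighted pairs, and invoke \Cref{thm:expander-routing} with $\alpha=k\epsilon$ to patch them. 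Since your argument invokes \Cref{thm:expander-routing} as a black box, it cannot serve as a proof of \Cref{thm:expander-routing}; the reasoning is circular.

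Beyond the circularity, the content of \Cref{thm:expander-routing} is not something you can obtain from \Cref{thm:main-multi} at all: the local Sherman algorithm only routes a demand up to a residual of $\epsilon\deg(v)$ per vertex per commodity, whereas expander routing must route the given demand \emph{completely}, with congestion scaling linearly in $\alpha$ --- exactly the property the paper needs to absorb the residuals. Iterating the local algorithm never eliminates the residual, so some genuinely different, exact routing machinery is required. The paper does not prove this theorem; it imports it, stating that it is implicit in the deterministic expander routing of \cite{chang2024deterministic} (distributed routing protocols in expanders, in the line of work on expander routing via hierarchical decompositions). A correct treatment of this statement would either cite that work, as the paper does, or reconstruct its routing machinery, not recompose the paper's own main theorem.
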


Given \Cref{thm:expander-routing}, we now describe an algorithm for \Cref{thm:expander-main}. Let $\tilde\epsilon=\epsilon/(k\cdot \textup{poly}(1/\phi)\cdot2^{O(\sqrt{\log n\log\log n})})$, where the $\textup{poly}(1/\phi)\cdot2^{O(\sqrt{\log n\log\log n})}$ factors come from \Cref{thm:expander-routing}. We run \Cref{thm:main-multi} on $G$, error parameter $\tilde\epsilon$, and the $k$-commodity demand $b\in\mathbb{R}^{V\times[k]}$ where $b_{j}=d_{j}(\mathbbm1_{s_{j}}-\mathbbm1_{t_{j}})$ for each $j\in[k]$. Then, \Cref{thm:main-multi} runs in $O(\tilde\epsilon^{-2}\log n \cdot(\left\lVert b\right\rVert_0+\tilde\epsilon^{-1}\left\lVert b\right\rVert_1))=O(\tilde\epsilon^{-2}\log n\cdot(k+\tilde\epsilon^{-1}D))$ time and returns a $k$-commodity flow $f$ where $f_{1},\ldots,f_{k}$ route demands $b_{1}-r_{1},\ldots,b_{k}-r_{k}\in\mathbb{R}^{V}$ for residual demands $r_{1},\ldots,r_{k}\in\mathbb{R}^{V}$ satisfying $|r_{j}(v)|\le\tilde\epsilon\deg(v)$ for all $v\in V,\,j\in[k]$.

We now wish to route the $k$-commodity (residual)\ demand $r$ defined by $r_{1},\ldots,r_{k}$ with total congestion of $k\tilde\epsilon\cdot\textup{poly}(1/\phi)\cdot2^{O(\sqrt{\log n\log\log n})}=\epsilon$ by calling \Cref{thm:expander-routing} \emph{once}. For each $j\in[k]$, arbitrarily decompose the demand $r_{j}$ into weighted pairs $(s_{j,1},t_{j,1},d_{j,1}),\ldots,(s_{j,\ell_{j}},t_{j,\ell_{j}},d_{j,\ell_{j}})$ such that $\sum_{i\in[\ell_{j}]}d_{j,\ell_{j}}\cdot(\mathbbm1_{s_{j,i}}-\mathbbm1_{t_{j,i}})=r_{j}$. A simple greedy algorithm achieves $\ell_{j}\le\left\lVert r_{j}\right\rVert _{0}$. Call \Cref{thm:expander-routing} on the weighted pairs $(s_{j,1},t_{j,1},d_{j,1}),\ldots,(s_{j,\ell_{j}},t_{j,\ell_{j}},d_{j,\ell_{j}})$ concatenated over all $j\in[k]$ and parameter $\alpha=k\tilde\epsilon$. Let $\ell=\ell_{1}+\dots+\ell_{k}$. Thus, \Cref{thm:expander-routing} returns an $\ell$-commodity flow with congestion $k\tilde\epsilon\cdot\textup{poly}(1/\phi)\cdot2^{O(\sqrt{\log n\log\log n})}=\epsilon$. For each $j\in[k]$, group the $\ell_{j}$ relevant flows into a single flow $\tilde{f}_{j}$ routing demand $r_{j}$. Observe that the flow $\hat{f}_{j}=f_{j}+\tilde{f}_{j}$ exactly routes demand $b_{j}$. Therefore, the $k$-commodity flow $\hat{f}$ defined by $\hat{f}_{1},\dots,\hat{f}_{k}$ precisely routes our $k$-commodity demand $b$ with total congestion $1+\epsilon$ as desired.

For running time, observe that the flow $f$ from \Cref{thm:main-multi} has support size (i.e., number of nonzero entries) at most $O(\tilde\epsilon^{-2}\log n\cdot(k+\tilde\epsilon^{-1}D))$ since that is also the running time. Constructing the $k$-commodity (residual)\ demand $r$ can be done in time linear in the support size of $f$, and the number of commodities in the call to \Cref{thm:expander-routing} has the same bound, i.e., $\ell=O(\tilde\epsilon^{-2}\log n\cdot(k+\tilde\epsilon^{-1}D))$. It follows, by \Cref{thm:expander-routing}, that the overall algorithm runs in time
\begin{align*}
&(m+O(\tilde\epsilon^{-2}\log n\cdot(k+\tilde\epsilon^{-1}D)))\cdot\textup{poly}(1/\phi)\cdot2^{O(\sqrt{\log n\log\log n})}
\\&=(m+\tilde\epsilon^{-2}k+\tilde\epsilon^{-3}D)\cdot\textup{poly}(1/\phi)\cdot2^{O(\sqrt{\log n\log\log n})}
\\&=(m+\epsilon^{-2}k^3+\epsilon^{-3}k^3D)\cdot\textup{poly}(1/\phi)\cdot2^{O(\sqrt{\log n\log\log n})}
\\&=(m+\epsilon^{-3}k^3D)\cdot\textup{poly}(1/\phi)\cdot2^{O(\sqrt{\log n\log\log n})},
\end{align*}
concluding the proof of \Cref{thm:expander-main}.

\bibliographystyle{alpha}
\bibliography{ref}

\end{document}